%%
%% This is file `sample-sigconf.tex',
%% generated with the docstrip utility.
%%
%% The original source files were:
%%
%% samples.dtx  (with options: `sigconf')
%% 
%% IMPORTANT NOTICE:
%% 
%% For the copyright see the source file.
%% 
%% Any modified versions of this file must be renamed
%% with new filenames distinct from sample-sigconf.tex.
%% 
%% For distribution of the original source see the terms
%% for copying and modification in the file samples.dtx.
%% 
%% This generated file may be distributed as long as the
%% original source files, as listed above, are part of the
%% same distribution. (The sources need not necessarily be
%% in the same archive or directory.)
%%
%% The first command in your LaTeX source must be the \documentclass command.
\documentclass[sigconf]{acmart}
\usepackage{amsmath}
%\allowdisplaybreaks

\usepackage{amssymb}
\usepackage{bm}
\usepackage{nicefrac}
\usepackage{booktabs}
\usepackage{array}
\usepackage{multirow}
\usepackage{threeparttable}
\usepackage{makecell}
\usepackage[procnumbered,ruled,vlined,linesnumbered]{algorithm2e}
\usepackage{siunitx}
\usepackage{stfloats}
\usepackage{graphicx}
\usepackage{subfigure}
\usepackage{hyperref}

\newtheorem{problem}{Problem}
\newtheorem{theorem}{Theorem}[section]

\newtheorem{lemma}[theorem]{Lemma}

\newtheorem{proposition}[theorem]{Proposition}

\newtheorem{definition}[theorem]{Definition}

\newenvironment{fminipage}%
{\begin{Sbox}\begin{minipage}}%
		{\end{minipage}\end{Sbox}\fbox{\TheSbox}}

\def\defeq{\stackrel{\mathrm{def}}{=}}
\def\setof#1{\left\{#1  \right\}}
\def\sizeof#1{\left|#1  \right|}

\def\eps{\epsilon}

\def\abs#1{\left|#1  \right|}

\def\trace#1{\mathrm{Tr} \left(#1 \right)}
\def\norm#1{\left\| #1 \right\|}

%\def\calE{\mathcal{E}}

%\newtheorem{proof}{Proof}

% my commands
%\newcommand{\note}[1]{}
%\newcommand{\note}[1]{\textbf{\color{red}(#1)}}

\newcommand{\D}{\ensuremath{\mathbf{D}}}

\newfont{\nset}{msbm10}

\def\norm#1{\Vert #1 \Vert}

\def\kh#1{\left( #1 \right)}

\def\ceil#1{\left\lceil #1 \right\rceil}

\def\defeq{\stackrel{\mathrm{def}}{=}}

\newcommand{\removelatexerror}{\let\@latex@error\@gobble}

\newcommand{\rea}{\mathbb{R}}

\newcommand\LL{\bm{\mathit{L}}}
\newcommand\Otil{\widetilde{O}}

\def\defeq{\stackrel{\mathrm{def}}{=}}
\def\trace#1{\mathrm{Tr} \left(#1 \right)}
\def\sizeof#1{\left|#1  \right|}
\def\setof#1{\left\{#1  \right\}}

\newcommand{\GainsEst}{\textsc{GainsEst}}
\newcommand{\FGainsEst}{\textsc{F-GainsEst}}
\newcommand{\wmax}{w_{{\max}}}

\newcommand{\ExactSM}{\textsc{Exact}}
\newcommand{\ApproxiSM}{\textsc{Approx}}

\newcommand\WW{\boldsymbol{\mathit{W}}}
\newcommand\XX{\boldsymbol{\mathit{X}}}

\newcommand\xx{\boldsymbol{\mathit{x}}}
\newcommand\hh{\boldsymbol{\mathit{h}}}
\newcommand\aaa{\boldsymbol{\mathit{a}}}
\newcommand\zeov{\boldsymbol{\mathit{0}}}
\newcommand\bb{\boldsymbol{\mathit{b}}}
\newcommand\cc{\boldsymbol{\mathit{c}}}
\newcommand\ee{\boldsymbol{\mathit{e}}}
\newcommand\pp{\boldsymbol{\mathit{p}}}
\newcommand\qq{\boldsymbol{\mathit{q}}}
\newcommand\rr{\boldsymbol{\mathit{r}}}
\renewcommand\SS{\boldsymbol{\mathit{S}}}
\renewcommand\AA{\boldsymbol{\mathit{A}}}
\newcommand\BB{\boldsymbol{\mathit{B}}}
\newcommand\CC{\boldsymbol{\mathit{C}}}
\newcommand\JJ{\boldsymbol{\mathit{J}}}
\newcommand\KK{\boldsymbol{\mathit{K}}}
\newcommand\DD{\boldsymbol{\mathit{D}}}
\newcommand\HH{\boldsymbol{\mathit{H}}}
\newcommand\EE{\boldsymbol{\mathit{E}}}
\newcommand\PP{\boldsymbol{\mathit{P}}}
\newcommand\MM{\boldsymbol{\mathit{M}}}
\newcommand\ZZ{\boldsymbol{\mathit{Z}}}
\newcommand\RR{\boldsymbol{\mathit{R}}}
\newcommand\QQ{\boldsymbol{\mathit{Q}}}

\newcommand\II{\boldsymbol{\mathit{I}}}
\renewcommand\vv{\boldsymbol{\mathit{v}}}
\newcommand{\SDDMSolver}{\textsc{Solve}}
\newcommand\ZZtil{\widetilde{\boldsymbol{\mathit{Z}}}}
\newcommand\zztil{\widetilde{\boldsymbol{\mathit{z}}}}

% Algorithm2e
\DontPrintSemicolon
\SetKw{KwAnd}{and}
%\SetProcnameSty{textsc}
\SetFuncSty{textsc}
\SetKwInOut{Input}{Input\ \ \ \ }
\SetKwInOut{Output}{Output}

\usepackage{tabularx}
\usepackage{stfloats}

%% Rights management information.  This information is sent to you
%% when you complete the rights form.  These commands have SAMPLE
%% values in them; it is your responsibility as an author to replace
%% the commands and values with those provided to you when you
%% complete the rights form.

%%
%% Submission ID.
%% Use this when submitting an article to a sponsored event. You'll
%% receive a unique submission ID from the organizers
%% of the event, and this ID should be used as the parameter to this command.
%%\acmSubmissionID{123-A56-BU3}

%%
%% The majority of ACM publications use numbered citations and
%% references.  The command \citestyle{authoryear} switches to the
%% "author year" style.
%%
%% If you are preparing content for an event
%% sponsored by ACM SIGGRAPH, you must use the "author year" style of
%% citations and references.
%% Uncommenting
%% the next command will enable that style.
%%\citestyle{acmauthoryear}

%%
%% end of the preamble, start of the body of the document source.
\begin{document}
	
	%%
	%% The "title" command has an optional parameter,
	%% allowing the author to define a "short title" to be used in page headers.
	\title{Minimizing Polarization in Noisy Leader-Follower \\Opinion Dynamics}
	%{Maximizing Current Flow Group Closeness Centrality by Creating Edges}
	
	%%
	%% The "author" command and its associated commands are used to define
	%% the authors and their affiliations.
	%% Of note is the shared affiliation of the first two authors, and the
	%% "authornote" and "authornotemark" commands
	%% used to denote shared contribution to the research.
	\author{Wanyue Xu}
	\affiliation{%
		\institution{Fudan University}
		%\streetaddress{1 Th{\o}rv{\"a}ld Circle}
		\city{Shanghai}
		\country{China}}
	\email{xuwy@fudan.edu.cn}
	
	\author{Zhongzhi Zhang}
    \authornote{Corresponding author.}
	\affiliation{%
		\institution{Fudan University}
		%\streetaddress{1 Th{\o}rv{\"a}ld Circle}
		\city{Shanghai}
		\country{China}}
	\email{zhangzz@fudan.edu.cn}
	
	%%
	%% By default, the full list of authors will be used in the page
	%% headers. Often, this list is too long, and will overlap
	%% other information printed in the page headers. This command allows
	%% the author to define a more concise list
	%% of authors' names for this purpose.
	%\renewcommand{\shortauthors}{Xu and Zhang}
	\renewcommand{\shorttitle}{Minimizing Polarization in Noisy Leader-Follower Opinion Dynamics}
	
	%%
	%% The abstract is a short summary of the work to be presented in the
	%% article.
	\begin{abstract}
The operation of creating edges has been widely applied to optimize relevant quantities of opinion dynamics. In this paper, we consider a problem of polarization optimization for the leader-follower opinion dynamics in a noisy social network with $n$ nodes and $m$ edges, where a group $Q$ of $q$ nodes are leaders, and the remaining $n-q$ nodes are followers. We adopt the popular leader-follower DeGroot model, where the opinion of every leader is identical and remains unchanged, while the opinion of every follower is subject to white noise. The polarization is defined as the steady-state variance of the deviation of each node's opinion from leaders' opinion, which  equals one half of the effective resistance $\mathcal{R}_Q$    between  the node group $Q$ and all other nodes. Concretely, we propose and study the problem of minimizing  $\mathcal{R}_Q$ by adding $k$ new edges with each incident to a node in $Q$. We show that the objective function is monotone and supermodular. We then propose a simple greedy algorithm with an approximation factor $1-1/e$ that approximately solves the problem in $O((n-q)^3)$ time. To speed up the computation, we also provide a fast algorithm to compute $(1-1/e-\eps)$-approximate effective resistance $\mathcal{R}_Q$, the running time of which is $\Otil (mk\eps^{-2})$ for any $\eps>0$, where the $\Otil (\cdot)$ notation suppresses the ${\rm poly} (\log n)$ factors. Extensive experiment results show that our second algorithm  is both effective and efficient.

\end{abstract}
	
	%%
	%% The code below is generated by the tool at http://dl.acm.org/ccs.cfm.
	%% Please copy and paste the code instead of the example below.
	%%
\begin{CCSXML}
<ccs2012>
   <concept>
       <concept_id>10010405.10010455.10010461</concept_id>
       <concept_desc>Applied computing~Sociology</concept_desc>
       <concept_significance>500</concept_significance>
       </concept>
   <concept>
       <concept_id>10003752.10003809.10003716.10011136.10011137</concept_id>
       <concept_desc>Theory of computation~Network optimization</concept_desc>
       <concept_significance>500</concept_significance>
       </concept>
   <concept>
       <concept_id>10003120.10003130.10003131.10003292</concept_id>
       <concept_desc>Human-centered computing~Social networks</concept_desc>
       <concept_significance>500</concept_significance>
       </concept>
 </ccs2012>
\end{CCSXML}

\ccsdesc[500]{Applied computing~Sociology}
\ccsdesc[500]{Theory of computation~Network optimization}
\ccsdesc[500]{Human-centered computing~Social networks}
	
	%%
	%% Keywords. The author(s) should pick words that accurately describe
	%% the work being presented. Separate the keywords with commas.
\keywords{Opinion dynamics, graph algorithm, data mining, discrete optimization}
	
	%% A "teaser" image appears between the author and affiliation
	%% information and the body of the document, and typically spans the
	%% page.
%	\begin{teaserfigure}
%		\includegraphics[width=\textwidth]{sampleteaser}
%		\caption{Seattle Mariners at Spring Training, 2010.}
%		\Description{Enjoying the baseball game from the third-base
%			seats. Ichiro Suzuki preparing to bat.}
%		\label{fig:teaser}
%	\end{teaserfigure}
	
	%%
	%% This command processes the author and affiliation and title
	%% information and builds the first part of the formatted document.
\maketitle

\section{Introduction}

The rapid development of digital technology and the Internet leads to an explosive growth of online social networks and social media~\cite{Le20}, which have drastically changed people's work, health, and life~\cite{SmCh08}. For example, the enormous popularity of online social networks and social media have brought great convenience to people all over the world to exchange in real time their opinions on some important hot issues or topics, which results in a fundamental change of the ways of opinion propagation, share and formation~\cite{PeRo19,AnYe19,BuLiZhCaLiSh20}. At the same time, the wide-range usage of online social networks and social media also exacerbates some social phenomena in the online virtual world, such as polarization~\cite{MaTeTs17,MuMuTs18,XuBaZh21} and disagreement~\cite{GaKlTa20}, although these phenomena might exist in human societies millennia ago.

%The rapid development of digital technology and the popularity of the Internet have brought great convenience to people's work and life\cite{SmCh08}, especially due to the widely usage of online social network and social media\cite{Le20}. These modern technologies have drastically changed the people interaction mode. Various subjects, as for example sociology and psychology, investigate the mechanisms leading to opinion formation among people\cite{PeRo19,AnYe19,BuLiZhCaLiSh20}. Research shows that opinion is formed by a combination of self-factor, social interactions and the dynamical process. For self-factor, a person may stick to his own opinion and the degree of this kind of stubbornness is different among individuals, which determines the generation of opinions. Various opinion dynamics describe the decision-making process of individuals and determine the way of opinion aggregation. Furthermore, it has been argued that social interactions play an important role in opinion diffusion. Some real-world phenomena in opinion systems such as polarization\cite{MaTeTs17,MuMuTs18}, disagreement\cite{GaKlTa20}, and diversity\cite{MaPa19}, which have become a hot subject of study in news transmission, marketing, and politics.

In order to understand the mechanisms for opinion transmission, evolution, and shaping, as well as their resulting social phenomena aggravated in virtual space, a variety of models have been developed, among which the DeGroot model~\cite{De74} is probably the first discrete-time model for opinion dynamics, the continuous-time counterpart of which was introduced in~\cite{Ta68}. After their establishment, the original discrete-time and continuous-time DeGroot models have been modified or extended  by incorporating different factors affecting opinion dynamics~\cite{No20}, such as stubborn individuals~\cite{FrJo90} and noise~\cite{XiBoKi07}. When an individual is completely stubborn~\cite{XuZhGuZhZh20}, it is a leader who never changes its opinion. In~\cite{PaBa10}, a noisy leader-follower model for opinion dynamics was proposed and studied, where some nodes are leaders with identical opinion, while the remaining nodes are followers, which are influenced by leaders and are simultaneously subject to stochastic disturbances.

In the above noisy leader-follower model, the presence of noise prevents followers from reaching consensus, whose opinions fluctuate around the opinion of leaders in long time limit. The derivation of followers' opinions from that of leaders can be quantified by coherence~\cite{PaBa10}, which is similar to the measure of polarization introduced  in~\cite{MuMuTs18}. In the context of opinion dynamics, polarization describes the extent of macroscopic deviations of opinions in the social system~\cite{MuMuTs18}. In different networks, polarization can exhibit rich behavior dependent on the position of leaders and network topology. Many previous work considered the problem of how to choose leaders in order to optimize relevant quantities for the noisy leader-follower model, such as minimizing polarization~\cite{PaBa10, ClPo11} that measures the performance or role of leaders in the social system. Except the operation on nodes,  the operation at the edge level has not been much studied, at least in the context of polarization optimization for noisy leader-follower opinion dynamics, despite  that the edge operation has been widely used in various application scenarios~\cite{IsErTeBe12,PaPiTs16,LiPaYiZh20,ZhZhCh21},  %~\cite{IsErTeBe12}
even in the context of opinion dynamics~\cite{BiKlOr11,GaDeGiMa17,ChLiDe18,AmSi19}.

Inspired by existing work on edge operation, in this paper, we focus on the problem of optimizing polarization by creating edges. Specifically, we study an optimization problem about noisy leader-follower DeGroot model~\cite{PaBa10} of opinion dynamics on a graph $G=(V,E)$ with $n$ nodes and $m$ edges. In the model, a group $Q \subset V$ of $q$ nodes are leaders with a fixed opinion, while all other nodes are followers, which are exposed to noise. The problem we address  is how to optimally create $k$ (a positive integer) new edges with each being incident to a node in $Q$, so that the polarization $P_{Q}(G) $ of the opinion dynamics is maximized. Based on the relation that $P_{Q}(G) $ is equal to one half of the resistance distance $\mathcal{R}_Q$ between node group $Q$ and all nodes, the problem is reduced to minimizing $\mathcal{R}_Q$ by adding $k$ edges connecting nodes in $Q$.

In addition to formulating the problem, other main contributions of our work include the following three points. First, it is shown that the objective function of optimization problem is monotone and supermodular. Then, two greedy approximation algorithms are developed to minimize the quantity $\mathcal{R}_Q$, by iteratively building $k$ edges. The former is a $ (1-1/e)$-approximation algorithm, while the latter is a $ (1-1/e-\eps)$-approximation algorithm for any small $\eps>0$. The time complexity of the two algorithms are, respectively, $O((n-q)^3)$ and $\Otil (mk\eps^{-2})$, where the $\Otil (\cdot)$ notation hides ${\rm poly} (\log n)$ factors. Finally, the performance of our algorithms are tested on various real networks, which substantially reduce the resistance distance $\mathcal{R}_Q$ and outperform several other baseline strategies of adding edges.

\section{Related Work}

In this section, we review  some existing work lying close to ours.

%DeGroot model~\cite{De74} is one of the most popular opinion dynamics models. Though it is simple and succinct, the law of influence of opinion formation among individuals is vividly depicted in DeGroot model. In this model, individual has only one scalar-valued opinion, evolving as an average of its neighbors' opinions. The continuous DeGroot model is established mathematically, a necessary and sufficient condition for reaching consensus of which was analyzed in~\cite{Be81}. Numerous extensions or variants of the DeGroot model have been proposed by considering affecting factors and processes in opinion dynamics~\cite{No20}. Friedkin-Johnson (FJ) model~\cite{FrJo90} is an important extension of the DeGroot model. Unlike DeGroot model where individuals are willing to conform with its neighbors' opinions, FJ model shows that each agent has two opinions, initial opinion and expressed opinion, and the individual insists on the initial opinion partly. 

\textit{Models.} DeGroot model~\cite{De74} is one of the most popular models for opinion dynamics. Though simple and succinct, it captures some important aspects of the interactions and processes of opinion evolution. The DeGroot model and its continuous-time counterpart~\cite{Ta68} are the basis of various subsequent models for opinion dynamics. For example, the Friedkin-Johnson (FJ) model~\cite{FrJo90} is an important extension of the DeGroot model by incorporating the intrinsic opinion and susceptibility to persuasion~\cite{AbKlPaTs18} for every node. Another major modification of the DeGroot model is the Altafini model~\cite{Al13}, by considering both cooperative and antagonistic interactions between nodes. Moreover, the DeGroot model was also extended by taking into account the influence of noise on the opinion evolution of every individual~\cite{XiBoKi07,BaJoMiPa12}. For other modifications or extensions of the DeGroot model, we refer the readers to the review literature~\cite{DoDiMa17}. %{DaGoLe13},

%There are also some meaningful extensions of DeGroot model. One of the examples is the leader-follower DeGroot model. With the existence of leaders, opinions will not reach consensus and tend to be diversity. Thus, there came a lot of literatures about leader-follower DeGroot model optimization such as amplifying the total opinion~\cite{MeAsDaAmAn13,VaFaFr14,MaAb19} and maximizing opinion diversity~\cite{MaPa19}. A DeGroot model with noise was introduced and studied~\cite{XiBoKi07}. In the noisy model, the limiting opinions do not converge but fluctuate around their weighted average value, leading to opinion disagreement or polarization. \cite{LoGaZa13} studied the upper and lower bounds of opinion disagreement. Variants or modifications based on DeGroot model were also proposed and studied~\cite{DaGoLe13,DoDiMa17}.

\textit{Optimization Problems.} Apart from the variants of the DeGroot model, another active direction about the extension of the DeGroot model is the optimization problem for different objectives. Many authors introduced leaders into DeGroot model, where every leader is totally stubborn with identical or different opinions. For the case that there are two types of leaders with opposite opinions, various leader placement problems were formulated and studied, in order to optimize different objectives, such as maximizing the opinion diversity~\cite{MaPa19,YiCaPa21} and maximizing the overall opinion~\cite{VaFaFr14}. For the case that leaders' opinions are the same, some similar combinatorial optimization problems were proposed for different purposes, such as minimizing convergence error~\cite{ClAlBuPo14} or convergence rate~\cite{LiXuLuChZe21}. % characterized by the smallest eigenvalue of a grounded Laplacian matrix~\cite{LoGaZa13}.  
Particularly, in~\cite{XiBoKi07} homogenous leaders were incorporated to the DeGroot model to affect the followers, which are subject to noise. In the noisy leader-follower model, the limiting opinions of followers fluctuate around leaders' opinion, which can be quantified by coherence or polarization. Since polarization depends on  leader position and network structure, the authors   studied the problem of optimally placing leaders to minimize polarization. Different from previous perspectives, we consider the problem of designing an ingenious strategy to add edges to minimize polarization.  

%Edge addition operation is widely used to improve node centrality~\cite{CrDaSeVe16,ShYiZh18,DaOlSe19} and maximize the number of spanning trees~\cite{LiPaYiZh20}. In many practical applications of graph data mining ~\cite{KeKlTa03}, the problems cannot be reduced to determining the importance of individual nodes, but are essentially to find a group of k nodes that is the most important among all node groups, each containing exactly k nodes. The operation of creating $k$ edges and connecting $k$ nodes has only been studied for centrality measures. There have been researches in group betweennees~\cite{MeSiSiBaSw18}, group closeness~\cite{OhSaKiMo17}, and group coverage~\cite{MeSiSiBaSw18}. Very few previous works consider the $k$-edge-addition optimization problem in opinion dynamics.

\textit{Graph Edit by Adding Edge.} Note that the quantity polarization can be considered as a measure of the role or centrality of leaders in the noisy leader-follower opinion dynamics. From this point of view, the essence for the addressed problem of minimizing polarization by adding edges is to increase the influence of leaders by creating edges. Thus far, concerted efforts have been devoted to the problem of optimizing the centrality of a node group by adding $k$ edges connecting nodes in the group. For example, many scientists have tackled the problem of adding a fixed number of edges to maximize different centrality measures of node groups, including group betweennees~\cite{MeSiSiBaSw18}, group closeness~\cite{OhSaKiMo17}, and group coverage~\cite{MeSiSiBaSw18}, and so on. However, previous work do not consider the optimization problem for polarization by edge operation.

%Changing individual interactions has an important impact on opinion dynamics. Different optimization goals can be achieved by edge operations. For instance, \cite{BiKlOr11,BiKlOr15} proposed edge addition strategies to minimizing the social cost at equilibrium in FJ model. \cite{GaDeGiMa17} and~\cite{ChLiDe18} reduced controversy and risk of conflict by creating links, respectively. \cite{AmSi19} aimed to use link recommendation to against opinion control in social networks. \cite{ZhZh21} proposed an edge addition method to maximize the overall opinion and presented a nearly linear algorithm for evaluating the overall opinion. \cite{ZhBaoZh21} tried to minimize the sum of polarization and disagreement in a social network by recommend $k$ links. However, there is no literature review so far to analyze the opinion polarization quantified by the steady-state variance of the deviation from the average value.

In the context of opinion dynamics, various optimization problems were also formulated to achieve different goals by edge operations, especially by adding edges. In~\cite{BiKlOr11}, edge addition strategy was adopted to minimize the social cost at equilibrium in the FJ model. In~\cite{GaDeGiMa17} and~\cite{ChLiDe18}, the operation of creating links was used to reduce controversy and risk of conflict, respectively. In~\cite{AmSi19}, a limited number of links was strategically recommended against malicious control of user opinions. In~\cite{ZhZh21,ZhZhLiZh23}, edge addition method was introduced to maximize the overall opinion. Finally, in~\cite{ZhBaoZh21}, ingenious strategy of link recommendation was to designed to minimize the sum of polarization and disagreement in  social networks. To the best of our knowledge, we are the first to adopt the strategy of edge addition to minimize polarization in noisy leader-follower opinion dynamics.

\section{Preliminaries}

In this section, we give a brief  introduction to some useful notations and tools, in order to facilitate  the description of our problem and algorithms.

\subsection{Notations}

We use normal lowercase letters like $a,b,c$ to represent scalars in $\rea$, normal uppercase letters like $A,B,C$ to represent sets, bold lowercase letters like $\aaa,\bb,\cc$ to represent vectors, and bold uppercase letters like $\AA,\BB,\CC$ to represent matrices.
Let $\aaa^\top$ and $\AA^\top$  denote, respectively, transpose of  vector $\aaa$ and matrix  $\AA$. Let $\trace \AA$ denote the trace of matrix $\AA$. We write $\AA_{[i,j]}$ to denote the entry at $i^{\rm th}$ row and $j^{\rm th}$ column of $\AA$. We use  $\AA_{[i,:]}$ and $\AA_{[:,j]}$ to denote, respectively, the $i^{\rm th}$ row  and the $j^{\rm th}$ column of $\AA$.    We write sets in matrix subscripts to denote submatrices. For example, $\LL_{Q}$ denotes the submatrix of $\LL$ obtained from $\LL$ by removing both the  row and  column indices in $Q$.  For two matrices $\AA$ and $\BB$, we write $\AA \preceq \BB$ to denote
that $\BB - \AA$ is positive semidefinite, that is,  for every real vector $\xx$ the relation $\xx^\top \AA \xx \leq \xx^\top \BB \xx$ holds.

We continue to introduce the notion of $\eps$-approximation.
\begin{definition}
	Let $a$ and $b$ be two nonnegative scalars.
	We say $a$ is an $\eps$-approximation ($0\leq\eps\leq1/4$) of $b$ if
	\begin{align*}
		(1-\eps) a \leq b \leq (1+\eps) a.
	\end{align*}
\end{definition}
For simplicity, we use the notation $a \approx_{\eps} b$ to denote that $a$ is an $\eps$-approximation of $b$.
There are some basic properties for $\eps$-approximation.
\begin{proposition}
	For nonnegative scalars $a,b,c$, and $d$,
	\begin{enumerate}
		\item \label{apxstart} if $a \approx_\eps b$, then $a + c \approx_\eps b + c$;
		\item if $a \approx_\eps b$ and $c \approx_\eps d$, then $a + c \approx_\eps b + d$;
		\item if $a \approx_{\eps} b$ and $c \approx_{\eps} d$, then $a/c \approx_{3\eps} b/d$;
	\end{enumerate}
\end{proposition}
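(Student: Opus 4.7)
The plan is to work entirely from the definition $a \approx_\eps b \iff (1-\eps)a \leq b \leq (1+\eps)a$, treating each part as a short algebraic calculation that uses the nonnegativity of $a,b,c,d$ and the range restriction $0\leq\eps\leq 1/4$. Because the claim is a cascade of elementary bounds rather than a structural statement, I expect the proof to be a straightforward verification; the only place a quantitative estimate is needed is part~(3).

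For part~(1), I would add $c$ to every side of $(1-\eps)a \leq b \leq (1+\eps)a$ to obtain $(1-\eps)a + c \leq b + c \leq (1+\eps)a + c$. Since $c\geq 0$ and $\eps\geq 0$, we have $c \geq (1-\eps)c$ on the left and $c \leq (1+\eps)c$ on the right, so these outer terms majorize $(1-\eps)(a+c)$ and are majorized by $(1+\eps)(a+c)$, respectively. For part~(2), I would simply add the two hypothesized chains $(1-\eps)a \leq b \leq (1+\eps)a$ and $(1-\eps)c \leq d \leq (1+\eps)c$ termwise; the result is $(1-\eps)(a+c) \leq b+d \leq (1+\eps)(a+c)$, which is exactly $a+c \approx_\eps b+d$.

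Part~(3) is the only step that requires care. Assuming $c,d>0$ (otherwise the ratio is undefined), the two hypotheses combine, using $(1-\eps)c \leq d \leq (1+\eps)c$ to invert and bound $1/d$, into
\begin{equation*}
\frac{1-\eps}{1+\eps}\cdot\frac{a}{c} \;\leq\; \frac{b}{d} \;\leq\; \frac{1+\eps}{1-\eps}\cdot\frac{a}{c}.
\end{equation*}
It then suffices to establish the numerical inequalities $\tfrac{1+\eps}{1-\eps} \leq 1+3\eps$ and $1-3\eps \leq \tfrac{1-\eps}{1+\eps}$ on the interval $0\leq\eps\leq 1/4$. Cross-multiplying the first reduces it to $\eps(1-3\eps) \geq 0$, which holds as soon as $\eps \leq 1/3$; cross-multiplying the second reduces it to $\eps(1+3\eps)\geq 0$, which is trivial. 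The mild obstacle here is just bookkeeping: the $3\eps$ slack must absorb the error terms on both the upper and lower sides simultaneously, and the hypothesis $\eps\leq 1/4$ is comfortably within the required bound $\eps\leq 1/3$.
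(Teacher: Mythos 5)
Your proof is correct. The paper states this proposition without any proof (it is listed as a collection of ``basic properties'' of $\eps$-approximation), so there is nothing to compare against; your direct verification from the definition $(1-\eps)a \leq b \leq (1+\eps)a$ --- termwise addition for parts (1) and (2), and the cross-multiplication check that $\tfrac{1+\eps}{1-\eps}\leq 1+3\eps$ and $\tfrac{1-\eps}{1+\eps}\geq 1-3\eps$ for $\eps\leq 1/3$ in part (3) --- is the standard argument and is exactly what the authors presumably had in mind.
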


Since we use the supermodularity in our algorithm, we first give a definition of the supermodular function. Let $X$ be a finite set, and $2^X$ be the set of all subsets of $X$. Then a supermodular function can be defined as follows.
\begin{definition}
	Let $f: 2^X \to \mathbb{R}$ be a set function on $X$. For any subset $S \subset T \subset X$ and any element $a \in X \setminus T$ , the function $f$ is supermodular if it satisfies
	\begin{align}
		f(S) - f(S \cup \{a\}) \geq f(T) - f(T \cup \{a\}).
	\end{align}
\end{definition}
We also give the definition of a monotone set function.
\begin{definition}
	A set function $f: 2^X \to \mathbb{R}$ is monotone decreasing if for any subset $S \subset T \subset X$ ,
	\begin{align}
		f(S) > f(T).
	\end{align}
\end{definition}

\subsection{Graphs and Laplacian Matrix}

Let $G = (V,E,w)$ be a connected undirected weighted network (graph), where $V$ is the set of vertices/nodes, $E \subseteq V \times V$ is the set of edges, and $w : E \to \rea_{+}$ is the edge weight function. Let $w_{\rm max}$ and $w_{\rm min}$ denote, respectively, the maximum and minimum weight among all edges.  For a pair of vertices $u,v\in E$, we write $u\sim v$ to denote $(u,v) \in E$. For a vertex  $u$, its weighted degree $\mathrm{deg}(u)$ is  $\sum_{u\sim v} w(u,v)$. Let $n = |V|$ denote the number of vertices and $m = |E|$ denote the number of edges.  %$E_Q$ means one 

The Laplacian matrix $\LL$ of $G$ is an $n\times n$ matrix,  whose entry associated with $u^{\mathrm{th}}$ row and $v^{\mathrm{th}}$ is defined as: $\LL_{[u,v]} =- w(u,v)$ if  $u\sim v$, $\LL_{[u,v]} =\mathrm{deg}(u)$ if  $u=v$, and $\LL_{[u,v]} =0$ otherwise. If we fix an arbitrary orientation for all  edges in $G$, then we can define the signed edge-vertex incidence matrix  $\BB_{m\times n}$ of  graph $G$,  whose entries are:  $\BB_{[e,u]}= 1$ if vertex $u$ is the  head of edge $e$, $\BB_{[e,u]} = -1$ if $u$ is tail $e$, and $\BB_{[e,u]} = 0$ otherwise. Let  $\ee_u$  denote the $u^{\rm th}$ standard basis vector of appropriate dimension.  For an oriented edge $e\in E$ with end vertices   $u$ and $v$, we define $\bb_e = \bb_{u,v}=\ee_u - \ee_v$ if $u$ and $v$ are, respectively, the head and tail of $e$. Let  $\WW_{m\times m}$ be a diagonal matrix with $\WW_{[e,e]} = w(e)$. Then the Laplacian matrix $\LL$ of $G$ can also be written as $\LL = \BB^\top \WW \BB$. It is easy to verify that $\LL = \sum\nolimits_{e\in E} w(e)\bb_e \bb_e^\top$, which means that $\LL$ is  positive semidefinite.
We refer to $w(e)\bb_e \bb_e^\top$ as the Laplacian of edge $e$.

The pseudoinverse $\LL^\dag$ of Laplacian matrix $\LL$ is $(\LL +\frac{1}{n}\JJ)^{-1} - \frac{1}{n}\JJ$~\cite{GhBoSa08}, where $\JJ$ is the matrix of  appropriate dimension with all entries being ones. Let $0 = \lambda_1 < \lambda_2 \leq \ldots \leq\lambda_{n}$ be eigenvalues of $\LL$ of a connected graph $G$. %, and let $\vv_1,\vv_2,\ldots,\vv_n$ be the corresponding orthonormal eigenvectors. 
The nonzero minimum eigenvalue $\lambda_2$ and the maximum  eigenvalue  $\lambda_n$ are $\lambda_{2} \geq w_{\min }/ n^2 $~\cite{LiSc18} and $\lambda_{n} \leq w_{\max}n$~\cite{SpSr11}, respectively. %Then $\LL$ can be decomposed as $\LL = \sum\nolimits_{i=2}^n \lambda_i \vv_i \vv_i^\top$,  and  its pseudoinverse can be recast  as $\LL^\dag = \sum\nolimits_{i=2}^n \frac{1}{\lambda_i} \vv_i \vv_i^\top$.   
Let  $\DD$ and $\HH$ be Laplacians of two connected graphs with the same vertex set.  It is easy to verify that if $\DD \preceq \HH$, then $\HH^\dag \preceq \DD^\dag$. In addition, although  Laplacian matrix $\LL$ is positive semidefinite, its  principal submatrices are positive definite. For any nonempty set $Q \subseteq V$, the minimum eigenvalue $\lambda_{\min }\left(\boldsymbol{L}_{Q}\right)$ and the maximum eigenvalue $\lambda_{\max }\left(\boldsymbol{L}_{Q}\right)$ of matrix $\boldsymbol{L}_{Q}$ satisfy $\lambda_{\min }\left(\boldsymbol{L}_{Q}\right) \geq\lambda_2\geq w_{\min } / n^{2}$ and $\lambda_{\max }\left(\boldsymbol{L}_{Q}\right) \leq \lambda_n\leq n^{2} w_{\max }$, respectively~\cite{PaBa10}.

%In addition, for  Laplacian matrix $\LL$, we have the following fact~\cite{MaZhWa16}.

\iffalse
\begin{lemma}
For any nonempty $Q \subseteq V, \lambda_{\min }\left(\boldsymbol{L}_{Q}\right) \geq w_{\min } / n^{2}$ and $\lambda_{\max }\left(\boldsymbol{L}_{Q}\right) \leq n^{2} w_{\max }$.
\end{lemma}
\begin{proof}
We upper bound $\lambda_{\max }$ using Cauchy interlacing:
	$$
	\lambda_{\max }\left(\boldsymbol{L}_{Q}\right) \leq \lambda_{\max }(\boldsymbol{L}) \leq \lambda_{\max }\left(\boldsymbol{L}^{K_{n}}\right) \leq n^{2} w_{\max }
	$$
	The lower bound of $\lambda_{\min }$ follows by
	$$
	\lambda_{\min }\left(\boldsymbol{L}_{Q}\right) \geq 1 / \operatorname{Tr}\left(\boldsymbol{L}_{Q}^{-1}\right) \geq w_{\min } / n^{2} .
	$$
\end{proof}
\fi

\begin{lemma}~\cite{MaZhWa16}\label{fact:inv+}
	Let $\LL$ be the Laplacian of a connected graph and 	let $\XX$ be a nonnegative, diagonal matrix with at least	one nonzero entry. Then, 	$\LL + \XX$ is positive definite, 	and every entry of $\kh{\LL + \XX}^{-1}$ is positive.
\end{lemma}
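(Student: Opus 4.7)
The plan is to prove the two conclusions separately, handling positive definiteness first and then using it to derive positivity of all entries of the inverse.

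For positive definiteness, I would argue as follows. Both $\LL$ and $\XX$ are positive semidefinite (the former because $\LL=\BB^\top\WW\BB$, the latter because it is nonnegative diagonal), so $\LL+\XX$ is positive semidefinite and it suffices to rule out a nontrivial null vector. Suppose $\vv^\top(\LL+\XX)\vv=0$. Since both quadratic forms are nonnegative, each must vanish: $\vv^\top\LL\vv=0$ and $\vv^\top\XX\vv=0$. The first, combined with the connectedness of the graph, forces $\vv\in\ker\LL=\mathrm{span}(\one)$, so $\vv=c\,\one$ for some scalar $c$. The second reads $\sum_i \XX_{[i,i]}\vv_i^2=0$; picking an index $i^\star$ with $\XX_{[i^\star,i^\star]}>0$ (which exists by hypothesis) gives $\vv_{i^\star}=0$, hence $c=0$ and $\vv=\zeov$. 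This proves $\LL+\XX\succ 0$ and in particular that it is invertible.

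For the entrywise positivity of $(\LL+\XX)^{-1}$, the key observation is that $\MM:=\LL+\XX$ is a symmetric, positive definite $Z$-matrix (its off-diagonal entries coincide with those of $\LL$, which are nonpositive), i.e.\ a Stieltjes / symmetric $M$-matrix. Write $\MM=\DD_0-\AA_0$, where $\DD_0$ is the strictly positive diagonal part of $\MM$ and $\AA_0\ge 0$ collects its negated off-diagonals. Setting $\TT:=\DD_0^{-1}\AA_0\ge 0$, I would show $\rho(\TT)<1$: indeed, $\DD_0^{-1/2}\MM\DD_0^{-1/2}=\II-\DD_0^{-1/2}\AA_0\DD_0^{-1/2}$ is positive definite (by the first part), so the symmetric matrix $\DD_0^{-1/2}\AA_0\DD_0^{-1/2}$, which is similar to $\TT$, has all eigenvalues strictly below $1$. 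The Neumann series then gives
\begin{equation*}
\MM^{-1}=\sum_{k=0}^{\infty}\TT^k\,\DD_0^{-1},
\end{equation*}
which is a sum of entrywise nonnegative matrices, so $\MM^{-1}\ge 0$ entrywise.

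To upgrade nonnegativity to strict positivity, I would exploit irreducibility coming from connectedness of $G$. The off-diagonal sparsity pattern of $\AA_0$ equals that of the adjacency matrix of $G$, so for any $i,j$ there is a path $i=v_0,v_1,\dots,v_\ell=j$ in $G$ with every $(v_r,v_{r+1})$ edge of $G$; this yields $(\TT^\ell)_{[i,j]}>0$ because it is a sum of products of strictly positive factors $\TT_{[v_r,v_{r+1}]}$. Combined with the strictly positive diagonal of $\DD_0^{-1}$, this makes the $(i,j)$ entry of $\MM^{-1}$ strictly positive for all $i,j$. The main obstacle, and the only place where the hypotheses are used in a nonobvious way, is making sure the Neumann series converges (i.e.\ $\rho(\TT)<1$): this is exactly where the positive definiteness established in the first part — itself dependent on $\XX$ having a nonzero entry and $G$ being connected — is essential.
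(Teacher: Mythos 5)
Your proposal is correct and follows essentially the same route as the paper's argument: establish positive definiteness, split $\LL+\XX$ into a positive diagonal part minus an entrywise-nonnegative part, expand the inverse as a convergent Neumann series to obtain entrywise nonnegativity, and use connectedness (irreducibility, via walks between every pair of vertices) to upgrade to strict positivity. The only cosmetic difference is your choice of splitting: you use the Jacobi-type splitting $\DD_0^{-1}\AA_0$, whereas the paper scales uniformly by the largest diagonal entry $\hat{\theta}$ and works with $\II-(\widetilde{\XX}+\AA/\hat{\theta})$; you are in fact more self-contained, since the paper cites positive definiteness and merely asserts the spectral-radius bound. One small patch is needed where you conclude $\rho(\DD_0^{-1}\AA_0)<1$: knowing that all eigenvalues of the symmetric matrix $\DD_0^{-1/2}\AA_0\DD_0^{-1/2}$ lie strictly below $1$ does not by itself bound the spectral radius, since a priori there could be an eigenvalue $\leq -1$. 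This is closed by observing that this matrix is entrywise nonnegative, so by Perron--Frobenius its spectral radius is itself a (nonnegative) eigenvalue and therefore equals its largest eigenvalue, which you have shown is below $1$; equivalently, for a symmetric entrywise-nonnegative matrix the bound $\abs{\xx^\top \DD_0^{-1/2}\AA_0\DD_0^{-1/2}\xx}\leq \abs{\xx}^\top \DD_0^{-1/2}\AA_0\DD_0^{-1/2}\abs{\xx}$ (with $\abs{\xx}$ the entrywise absolute value) shows the largest eigenvalue dominates all others in absolute value.
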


\iffalse
\begin{proof}
    It can be obtained from Lemma 3 in~\cite{} that $\LL + \XX$ is positive definite. Let $\theta_i=(\LL+\XX)_{ii}$ and $\hat{\theta}=\max_{1\le i \le n}\theta_i$. It follows that 
    \begin{align}
        \LL+\XX = \D-\AA =\hat{\theta}\left[I_n-(\Tilde{\XX}+\frac{\AA}{\hat{\theta}})\right],\notag
    \end{align}
    where $\theta=[\theta_1,...,\theta_n]$ and $\Tilde{\XX}=I_n-\frac{\D_\theta}{\hat{\theta}}$. It is not hard to show that $\Tilde{\XX}+\frac{A}{\hat{\theta}}$ is a nonnegative matrix with spectral radius $\rho< 1$. Hence, we have
    \begin{align}\label{neq}
        (\LL+\XX)^{-1}=\frac{1}{\hat{\theta}}\sum_{k=0}^{\infty}\left(\Tilde{\XX}+\frac{\AA}{\hat{\theta}}\right)^k \ge \frac{1}{\hat{\theta}}\sum_{k=0}^{\infty}\left(\frac{\AA}{\hat{\theta}}\right)^k.
    \end{align}
    For every pair of nodes $(i,j)\in V\times V$, since $G$ is connected, there exists at least one walk from $i$ to $j$ with length $k\ge 1$, i.e., $(\AA^k)_{ij}>0$. Therefore, $\sum_{k=0}^\infty \AA^k >0$. Together with~\eqref{neq}, we have $(\LL+\XX)^{-1}>0$.
\end{proof}
\fi

%In the following text, we use  $\lambda_{\min}$ and  $\lambda_{\max}$ to denote, respectively, the minimum and maximum nonzero eigenvalues of  $\LL$.

\subsection{Electrical Network and Effective Resistance}%, and Current Flow Closeness Centrality

For a connected graph $G=(V,E,w)$, we can define a corresponding electrical network $G=(V,E,r)$ by considering edges as resistors and considering vertices as junctions between resistors. The resistor of an associated  edge $e$ is $r_e=1/w(e)$.  The resistance distance $\mathcal{R}(u,v)$ between two vertices $u$ and $v$ in graph  $G$  is defined as the effective resistance between $u$ and $v$ in  the corresponding electrical network. Specifically, $\mathcal{R}(u,v)$ is equal  to the potential difference between $u$ and $v$ when a unit current is injected to $u$ and extracted from $v$.    For a  vertex $v$, its effective resistance  denoted by $\mathcal{R}_v$ is the sum of $\mathcal{R}(u,v)$ over all vertices in $G$, i.e., $\mathcal{R}_v=\sum_{u \in V }\mathcal{R}(u,v)$.  It has been shown that the resistance distance $\mathcal{R}(u,v)$  is equal to diagonal entry  $({\LL_{\{v\}}^{-1}})_{[u,u]}$ of $\LL_{\{u\}}$~\cite{INK+13}. Then, we have  $\mathcal{R}_v = \trace{{\LL_{\{v\}}^{-1}}}$.

%Since the effective  resistance $\mathcal{R}(u,v)$ is a distance measure~\cite{KlRa93} between  $u$ and $v$, one can define current flow closeness centrality (CFCC)~\cite{BrFl05}.  For a vertex $v$ in graph $G$ with $n$ vertices, its CFCC $\mathcal{C}_v$  is defined as $n$ time the inverse of $\mathcal{R}_v$, that is, $\mathcal{C}_v=n/\mathcal{R}_v$.

%Let $\LL_{\{u\}}$ be the submatrix of Laplacian $\LL$,  obtained from $\LL$ by deleting the row and column corresponding to vertex $u$.

In addition to the effective  resistance between two vertices, one can define effective  resistance $\mathcal{R}(u,Q)$ between vertex $u$ and a group $Q$ of vertices~\cite{ClPo11}.  Define $\Omega_{u,Q}$ as the current exiting vertex  $u$ when the vertices in $Q$  are grounded (i.e. have voltage 0) and  vertex  $u$ has voltage
1.  Then the effective resistance $\mathcal{R}(u,Q)$  is equal to reciprocal of  $\Omega_{u,Q}$, that is $\mathcal{R}(u,Q)=1/\Omega_{u,Q}$.
By definition, $\mathcal{R}(u,Q)=0$ if $u \in Q$.  The  effective resistance $\mathcal{R}_Q$ of a vertex set  $Q$ is defined as  the sum of $\mathcal{R}(u,Q)$ over all vertices in $V$: $\mathcal{R}_Q=\sum_{u\in V}\mathcal{R}(u,Q)$.  When set $Q$ includes only one vertex $v$, we have $\mathcal{R}(u,Q)=\mathcal{R}(u,v)$ and $\mathcal{R}_Q=\mathcal{R}_v$.  It was shown~\cite{ClPo11} that $\mathcal{R}(u,Q)=(\LL^{-1}_{Q})_{[u,u]}$ for  $u \notin Q$ and  $\mathcal{R}_Q={\rm Tr}(\LL_Q^{-1})$. %Naturally, the CFCC of an individual vertex can be easily extended to a group  $Q$ of vertices~\cite{LiPeShYiZh19}.  For a vertex group  $Q$, its CFCC $\mathcal{C}_Q$ is defined as $\mathcal{C}_Q=n/\mathcal{R}_Q$. Many experiments demonstrate that  CFCC has better discriminative power than traditional closeness centrality measures based on shortest paths~\cite{BeWeLuMe16}.

\section{Problem Formulation} %CDC'21 - Stacy

In this section, we first introduce the leader-follower continuous-time Degroot opinion dynamics model with noise~\cite{PaBa10}. Then we formulate the optimization problem and analyze the properties of objective function.

\subsection{Noisy Leader-Follower Opinion Dynamics}

For the leader-follower model on a social network $G=(V,E,w)$ with $n=|V|$ nodes and $m=|E|$ edges, where the $n$ nodes represent agents and the $m$ edges denote social affinity between agents. Moreover, nodes are classified into two groups: a set $Q$ of $q$ nodes are leader, and the remaining $n-q$ nodes in set $ V\backslash Q$ are followers. In leader-follower model, each node $i$ has a real-valued opinion $\xx_i(t) $ at time $t$.  %which constitutes the $i$-the entry of the opinion vector $\xx_i(t) $ for all nodes at time $t$. 
If $i$ is a leader node, its opinion remains unchanged over time, with $\xx_i(t)$ being a constant $\bar{x}$ for all $t$. If $i$ is a follower node, its opinion is updated based on only its own opinion and the opinions of its neighbors, and is subject to stochastic disturbances at the same time. Concretely, the opinion dynamics of each follower node $i$ is described by
\begin{align}\label{xdotv.eq}
\dot{\xx}_i(t)&=-\sum_{j \sim i} \WW_{[i,j]} [\xx_i(t)- \xx_ j(t)]+\zeta_i(t)\,,
\end{align}
where $\zeta_i(t)$ is a white noise with zero-mean and unit variance.

In the leader-follower opinion model, due to the presence of noise, the opinions of followers do not converge, but fluctuate around the opinion $\bar{x}$ of leaders,  in spite that the objective is for all nodes to follow the opinion $\bar{x}$. To measure how the opinion of each follower derivates from the opinion $\bar{\xx}$ of leaders in network $G$, we introduce the notion of polarization~\cite{MuMuTs18,XuBaZh21} denoted by $P_Q(G)$, which equals the steady-state variance of the deviation from consensus value $\bar{\xx}$ of the system,
\begin{equation}\label{polarization}
P_Q(G):=\lim _{t \rightarrow \infty} \sum_{i \in V} \mathbf{E}\left\{\left(\xx_{i}(t)-\bar{\xx}\right)^{2}\right\}.
\end{equation}
The smaller the $P_Q(G)$, the less the influence of noise on the leader-follower opinion dynamics, and the vice versa. Without loss of generality, in this paper we assume that $\bar{\xx}= 0$. In this case, $P_Q(G)$ is one-half the effective resistance $\mathcal{R}_Q$ of a node set $Q$ of leaders.
\begin{lemma}\cite{PaBa10}
For the noisy leader-follower opinion dynamics model on graph  $G=(V,E,w)$, where a set $Q \subset V$ of nodes are leaders with fixed opinion $0$,  the polarization $P_Q(G)$ is
	%$\hh \in \{0,1\}^n$ is
	\begin{equation}\label{polarization}
P_Q(G)=\frac{1}{2} \trace{\LL_Q^{-1}}=\frac{1}{2}\mathcal{R}_Q\,. %=\frac{1}{2} \sum_{i=1}^{n-k} \frac{1}{\lambda_{i}(\LL_Q)}
	\end{equation}
%	where $\LL_Q$ is the principal submatrix of the Laplacian matrix corresponding to $Q$.
\end{lemma}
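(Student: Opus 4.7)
The plan is to reduce the polarization formula to the standard steady-state analysis of an Ornstein--Uhlenbeck process driven by the principal submatrix $\LL_Q$.

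First, I would rewrite the follower dynamics in compact matrix form. Let $\yy(t)\in\rea^{n-q}$ stack the opinions of the followers. Since each leader $j\in Q$ has $\xx_j(t)=\bar{\xx}=0$ for all $t$, the contribution of leader neighbors to $\dot{\xx}_i(t)$ in~\eqref{xdotv.eq} simplifies, and the full row sum $(\LL\xx(t))_i$ collapses to a sum over follower indices only. This yields the linear stochastic differential equation
\begin{equation*}
\dot{\yy}(t) = -\LL_Q\,\yy(t) + \zeta(t),
\end{equation*}
where $\zeta(t)$ is a vector of independent zero-mean unit-variance white noises. By the spectral bound recalled in the preliminaries, $\LL_Q$ is symmetric positive definite, so the system is asymptotically stable and admits a unique stationary Gaussian distribution with zero mean.

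Next, I would compute the stationary covariance $\Sigma = \lim_{t\to\infty}\mathbf{E}\{\yy(t)\yy(t)^\top\}$. For a stable linear SDE of the above form with identity diffusion, $\Sigma$ is the unique symmetric positive definite solution of the Lyapunov equation
\begin{equation*}
\LL_Q\,\Sigma + \Sigma\,\LL_Q = \II.
\end{equation*}
Because $\LL_Q$ is symmetric and invertible, one verifies directly that $\Sigma = \tfrac{1}{2}\LL_Q^{-1}$ satisfies this equation, and uniqueness of the solution forces $\Sigma = \tfrac{1}{2}\LL_Q^{-1}$.

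Finally, I would relate $\Sigma$ to the polarization. Every leader $i\in Q$ contributes $\mathbf{E}\{(\xx_i(t)-\bar{\xx})^2\}=0$ because leader opinions are pinned at $\bar{\xx}=0$, so the sum defining $P_Q(G)$ reduces to the sum of stationary variances over the followers, which is exactly $\trace{\Sigma}$. Substituting $\Sigma=\tfrac{1}{2}\LL_Q^{-1}$ and invoking the identity $\mathcal{R}_Q=\trace{\LL_Q^{-1}}$ recalled from the preliminaries gives $P_Q(G)=\tfrac{1}{2}\trace{\LL_Q^{-1}}=\tfrac{1}{2}\mathcal{R}_Q$, as claimed. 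The only nontrivial bookkeeping, and thus the main potential source of confusion, is the initial step of checking that dropping the leader rows and columns of $\LL$ is exactly what produces the drift matrix $\LL_Q$ in the follower subsystem; once that index manipulation is made carefully, the remaining Lyapunov/trace computation is routine.
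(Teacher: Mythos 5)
Your proposal is correct and is essentially the standard derivation behind this lemma: the paper itself states the result without proof, importing it from the cited reference \cite{PaBa10}, where the polarization is identified with the squared $H_2$ norm of the follower subsystem $\dot{\yy}=-\LL_Q\yy+\zeta$, i.e.\ $\trace{\int_0^\infty e^{-2\LL_Q t}\,dt}=\frac{1}{2}\trace{\LL_Q^{-1}}$ --- exactly the quantity you obtain by solving the Lyapunov equation $\LL_Q\Sigma+\Sigma\LL_Q=\II$. Your reduction of the dynamics to the grounded Laplacian $\LL_Q$, the uniqueness argument for $\Sigma=\frac{1}{2}\LL_Q^{-1}$, and the final appeal to $\mathcal{R}_Q=\trace{\LL_Q^{-1}}$ are all sound.
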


\subsection{Problem Statement}
\iffalse
The quantity polarization $\mathcal{P}$ is identical to the $H_2$ norm of the system~\eqref{system} and can be given by
 \begin{equation}
\mathcal{P}=\trace{\int_{0}^{\infty} e^{-\LL[\hh]^{*} t} e^{-\LL[\hh] t} d t}.
\end{equation}
\fi

As shown in~\eqref{polarization}, for the noisy leader-follower opinion dynamics in graph $G$ with a given group $Q$ of $q$ leader nodes described in~\eqref{xdotv.eq}, the quantity polarization $P_Q(G)$ is closely related to the position of leader nodes, implying that the selection of leaders has a strong impact on this   quantity, which can be used to measure the performance or role  of the group $Q$ of nodes in the opinion dynamics: the smaller the quantity $P_Q(G)$, the better the performance of the node group $Q$. Due to the equivalence of $\mathcal{R}_Q$ or $P_Q(G)$, below we alternatively use $\mathcal{R}_Q$, instead of $P_Q(G)$, to denote the performance of the nodes in group $Q$.

As will be shown later, if we add some edges, for each of which one end node is in $Q$ and the other end node is in $V\setminus Q$, then the effective resistance $\mathcal{R}_Q$ of the node group $Q$ will decrease. Let $E_Q$ be the set of nonexistent edges, each having a given weight and being incident to nodes in $Q$ and $V\setminus Q$, respectively. Then, the following problem arises naturally: How to optimally select a subset $S$ of $E_Q$, which includes $k$ edges, so that the effective resistance of node set $Q$ is minimized. In the sequel, we will address this optimization problem, which can be stated in a formal way as follows.
\begin{problem}\label{PRB1}
	Given a connected graph $G=(V,E,w)$, a node set $Q$, a candidate edge set $E_Q$ and an integer $k$, find an edge set $S$ satisfying $S \subset E_Q$ and $|S|=k$ such that effective resistance of node group $Q$ is minimized.
\end{problem}

Let $G^\prime $ denote the network augmented by adding the edges in $S$ to $G$, i.e. $G^\prime = (V,E\cup S, w^\prime)$, where $w^\prime: E\cup S \to \rea_{+}$ is the new weight function. Let $\mathcal{R}_Q(S)$ denote the effective resistance of the node group $Q$ in the augmented network $G^\prime(V,E \cup S,w')$, and let $\LL(S)$ be the Laplacian matrix  of $G^\prime$. Then, the set function optimization problem~\eqref{PRB1} can be formulated as:
\begin{align}\label{pro2}
\underset{S\subset E_Q,\, \sizeof{S}=k}{\mathrm{arg\,min}} \quad \mathcal{R}_Q(S)= \trace{\LL(S)_Q^{-1}}.
\end{align}

To solve problem~\eqref{PRB1}, exhaustively searching for the set $S$ of $k$ edges that maximally decreases the quantity $\mathcal{R}_Q(S)$ of $Q$ needs to calculate this quantity for every possible combination of $k$ edges out of the set  $E_Q$. Intuitively, this constitutes a combinatorial optimization problem with an exponential computational complexity.

\subsection{Properties  of Objective Function} %Monotone and Supermodular 

%To solve the  optimization problem~\eqref{pro}, exhaustively searching for the set $S$ of $k$ edges that maximally increases the opinion polarization $\mathcal{P}$ of $Q$ needs to calculate this quantity for every possible combination of $k$ edges out of the set  $E_Q$. Intuitively, this constitutes a combinatorial optimization problem with an exponential computational complexity.

%Although the  optimization problem is NP-hard, 
Next we show that the objective function in~\eqref{pro2} has two desirable properties, that is, it is monotone and supermodular. Let $2^{E_Q}$ represent all the subsets of $E_Q$. Then the effective resistance of vertex group $Q$  in the augmented graph can be represented as a set function $\mathcal{R}_Q(\cdot): 2^{E_Q} \rightarrow \mathbb{R}$.  We first prove that function  $\mathcal{R}_Q(\cdot)$ is monotone.

\begin{theorem}\label{thm:MI}
	$\mathcal{R}_Q(S)$ is a monotonically decreasing function of the edge set  $S$. In other words, for any subsets $S \subset T \subset E_Q$, $\mathcal{R}_Q(T) < \mathcal{R}_Q(S)$ holds.
	%\begin{align*}
	%	\mathcal{R}_Q(T) < \mathcal{R}_Q(S).
	%\end{align*}
\end{theorem}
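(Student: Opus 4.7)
The plan is to leverage the Loewner (PSD) ordering of Laplacians together with the standard fact that inversion reverses this ordering on positive definite matrices. Concretely, for any $S \subset T \subset E_Q$ I would decompose
\begin{equation*}
\LL(T) \;=\; \LL(S) \;+\; \sum_{e \in T \setminus S} w(e)\, \bb_e \bb_e^\top,
\end{equation*}
so each added rank-one term is PSD and hence $\LL(T) \succeq \LL(S)$. Passing to the principal submatrix indexed by $V \setminus Q$ preserves this ordering, giving $\LL(T)_Q \succeq \LL(S)_Q$. Both matrices are positive definite (a standard Dirichlet-Laplacian fact that also follows from Lemma~\ref{fact:inv+}), so $\LL(T)_Q^{-1} \preceq \LL(S)_Q^{-1}$, and taking traces yields $\mathcal{R}_Q(T) \leq \mathcal{R}_Q(S)$.

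To upgrade this to strict inequality, I would use the structural assumption that every edge in $E_Q$ has exactly one endpoint in $Q$ and one in $V \setminus Q$. For such an edge $e = (u,v)$ with $u \in Q$ and $v \notin Q$, $\bb_e = \ee_u - \ee_v$, and after deleting the rows and columns indexed by $Q$, the only surviving contribution of $w(e)\bb_e\bb_e^\top$ is the single diagonal entry $w(e)\,\ee_v\ee_v^\top$. Therefore
\begin{equation*}
\LL(T)_Q - \LL(S)_Q \;=\; \sum_{e=(u,v)\in T\setminus S} w(e)\,\ee_v \ee_v^\top
\end{equation*}
is a nonzero PSD diagonal matrix whenever $T \setminus S$ is nonempty. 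To conclude, set $B = \LL(S)_Q$, $A = \LL(T)_Q$, and $P = B^{-1/2} A B^{-1/2}$; then $P \succeq \II$ and $P \neq \II$, which forces $\II - P^{-1}$ to have a strictly positive eigenvalue. Consequently $B^{-1} - A^{-1} = B^{-1/2}(\II - P^{-1})B^{-1/2}$ is a nonzero PSD matrix with strictly positive trace, giving $\mathcal{R}_Q(S) > \mathcal{R}_Q(T)$.

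The routine inequality $\mathcal{R}_Q(T) \leq \mathcal{R}_Q(S)$ is essentially immediate from PSD monotonicity; the main obstacle is the strict version. The delicate point is that after grounding the rows and columns of $Q$, one must verify that the added edges still inject a genuinely nonzero PSD perturbation into the reduced matrix, which is exactly where the assumption that $E_Q$ consists of cut edges between $Q$ and $V \setminus Q$ does the work. Everything else is the standard operator-monotonicity argument for $X \mapsto X^{-1}$ on the PD cone.
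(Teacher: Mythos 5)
Your proposal is correct, but it takes a genuinely different route from the paper. The paper's proof adds the edges of $T\setminus S$ one at a time and, for a single edge $e=(u,v)$ with $u\in V\setminus Q$, writes $\LL(S\cup\{e\})_Q=\LL(S)_Q+w(e)\ee_u\ee_u^\top$ and applies the Sherman--Morrison formula to get the explicit closed form
\begin{align*}
\mathcal{R}_Q(S)-\mathcal{R}_Q(S\cup\{e\})=\frac{w(e)\,\ee_u^\top\bigl(\LL(S)_Q^{-1}\bigr)^2\ee_u}{1+w(e)\,\ee_u^\top\LL(S)_Q^{-1}\ee_u}>0,
\end{align*}
with strict positivity coming from positive definiteness of $\LL(S)_Q^{-1}$; the general case then follows by telescoping along a chain $S=S_0\subset S_1\subset\cdots\subset S_r=T$. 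You instead handle all of $T\setminus S$ at once via Loewner monotonicity of inversion, and obtain strictness through the spectral argument on $P=B^{-1/2}AB^{-1/2}\succeq\II$, $P\neq\II$. Both arguments are sound, and both correctly rely on the same structural fact (your reduction of the grounded perturbation to $\sum w(e)\ee_v\ee_v^\top$ is exactly the paper's Lemma~\ref{lem:SMF}). The trade-off: your approach is cleaner and avoids the induction, making the strict-inequality mechanism transparent; the paper's Sherman--Morrison computation is less elegant here but earns its keep because the resulting closed-form marginal decrease is precisely the quantity $\mathcal{R}_Q^\Delta(e)$ in~\eqref{eq:SM} that the greedy algorithms later maximize, so the same calculation is reused throughout the paper.
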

%%%%%%%%%%%%%%%%%%%%%%%
\iffalse
\begin{proof}
	For any two subsets $S$ and $ S\cup\{e\} $ satisfying $S  \subset S\cup\{e\} \subset E_Q$, we prove that $\mathcal{R}(S\cup\{e\}) < \mathcal{R}(S)$. By definition, we have
	\begin{align*}
		\mathcal{R}_Q(S\cup\{e\}) - \mathcal{R}_Q(S) = \trace{\LL(S\cup\{e\})_Q^{-1}} - \trace{\LL(S)_Q^{-1}}.
	\end{align*}
	Assume that  edge $e$ connects two vertices $u$ and $v$ with $u \in V\setminus Q$  and $v \in Q$. Then,  $\LL(S\cup\{e\})_Q = \LL(S)_Q + w(e)\ee_u \ee_u^\top$.  According to  Sherman-Morrison formula~\cite{Me73},  one  obtains
	\begin{align}
		\mathcal{R}_Q(S\cup\{e\}) - \mathcal{R}_Q(S) =& - \frac{\trace{w(e)\LL(S)_Q^{-1} \ee_u \ee_u^\top \LL(S)_Q^{-1}}}{1 + w(e)\ee_u^\top \LL(S)_Q^{-1} \ee_u} \notag\\
		=& - \frac{w(e)\ee_u^\top \kh{\LL(S)_Q^{-1}}^2 \ee_u }{1 + w(e)\ee_u^\top \LL(S)_Q^{-1} \ee_u}.\notag
	\end{align}
	Since $\LL(S)_Q^{-1}$ is positive definitive, $\mathcal{R}_Q(S\cup\{e\}) < \mathcal{R}_Q(S)$.
	
	Suppose that $S \subset T$ and $T\setminus S = \{e_1,e_2,\cdots,e_r\}$ and define
	$S = S_0 \subset S_1 \subset S_2 \subset \cdots \subset S_r = T$ satisfying $S_i\setminus S_{i-1} = \{e_i\}$. Then,
	\begin{align*}
		\mathcal{R}_Q(S) = \mathcal{R}_Q(S_0) < \mathcal{R}_Q(S_1)  < \cdots < \mathcal{R}_Q(S_r) = \mathcal{R}_Q(T),
	\end{align*}
	which completes the proof.
\end{proof}
\fi
%We proceed to show that  function $\mathcal{R}_Q(\cdot)$ is  supermodular.
\begin{theorem}\label{thm:SM}
	$\mathcal{R}_Q(S)$ is supermodular. That is,  for any set $S \subset T \subset E_Q$ and any edge $e \in E_Q \setminus T$, the relation $\mathcal{R}_Q(T) - \mathcal{R}_Q(T \cup \{ e \} ) \leq \mathcal{R}_Q(S) - \mathcal{R}_Q(S \cup \{ e \} )$ holds.
%	\begin{align*}
%		\mathcal{R}_Q(T) - \mathcal{R}_Q(T \cup \{ e \} ) \leq \mathcal{R}_Q(S) - \mathcal{R}_Q(S \cup \{ e \} ).
%	\end{align*}
\end{theorem}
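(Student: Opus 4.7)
The plan is to establish supermodularity via a continuous-integration argument that turns the discrete inequality into the non-negativity of a mixed partial derivative of a trace. I first invoke the standard lattice reduction: the general inequality for $S \subset T$ and $e \in E_Q \setminus T$ follows, by chaining along any sequence that adds the elements of $T\setminus S$ one at a time, from the special two-edge version
\begin{equation*}
\mathcal{R}_Q(S) + \mathcal{R}_Q(S\cup\{e,f\}) \,\geq\, \mathcal{R}_Q(S\cup\{e\}) + \mathcal{R}_Q(S\cup\{f\})
\end{equation*}
for arbitrary $S\subseteq E_Q$ and distinct $e,f \in E_Q\setminus S$. So I focus on this two-edge inequality.

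For it, I parameterize both edge additions continuously. Let $u_e, u_f \in V\setminus Q$ denote the follower endpoints of $e$ and $f$, and define, for $(s,t)\in[0,1]^2$,
\begin{equation*}
\MM_{s,t} \defeq \bigl(\LL(S)_Q + s\,w_e\,\ee_{u_e}\ee_{u_e}^\top + t\,w_f\,\ee_{u_f}\ee_{u_f}^\top\bigr)^{-1}, \qquad g(s,t) \defeq \trace{\MM_{s,t}}.
\end{equation*}
The four corner values $g(0,0), g(1,0), g(0,1), g(1,1)$ reproduce exactly the four $\mathcal{R}_Q$ terms above, so the target inequality becomes $\int_0^1\!\!\int_0^1 \partial_s\partial_t\, g(s,t)\,ds\,dt \geq 0$. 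Using $\partial_t \MM_{s,t} = -w_f\,\MM_{s,t}\,\ee_{u_f}\ee_{u_f}^\top\,\MM_{s,t}$ (and the analogous identity for $\partial_s$), then cycling the trace and merging two equal cross terms via the symmetry of $\MM_{s,t}$ and $\MM_{s,t}^2$, I obtain
\begin{equation*}
\partial_s\partial_t\, g(s,t) \,=\, 2\,w_e w_f \,(\ee_{u_e}^\top \MM_{s,t}\,\ee_{u_f})\,(\ee_{u_e}^\top \MM_{s,t}^2\,\ee_{u_f}).
\end{equation*}

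The final step is to verify that this expression is non-negative on the unit square. The matrix $\LL(S)_Q + s w_e\ee_{u_e}\ee_{u_e}^\top + t w_f\ee_{u_f}\ee_{u_f}^\top$ decomposes as the Laplacian of the subgraph induced on $V\setminus Q$ plus a non-negative diagonal matrix (one entry per follower recording the total weight of its edges into $Q$, including the fractional contributions from $e$ and $f$). This brings it into the hypothesis of Lemma~\ref{fact:inv+}, applied componentwise: since $G$ is connected, every component of the subgraph on $V\setminus Q$ contains at least one vertex adjacent to $Q$, so that component's diagonal perturbation has at least one nonzero entry. Hence $\MM_{s,t}$ has non-negative entries throughout, and $\MM_{s,t}^2$ inherits this as a product of two non-negative matrices, so both factors above are non-negative and the integrand is $\geq 0$. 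The main technical obstacle is the derivative bookkeeping: two applications of the inverse-derivative rule produce two symmetric cross terms that must be recognized as equal (using the symmetry of $\MM_{s,t}$ and $\MM_{s,t}^2$) before the rank-one factorization becomes clean; a direct Sherman--Morrison attack comparing $\mathcal{R}_Q(S) - \mathcal{R}_Q(S\cup\{e\})$ to $\mathcal{R}_Q(S\cup\{f\}) - \mathcal{R}_Q(S\cup\{e,f\})$ is possible but leads to a more opaque scalar inequality whose positivity ultimately rests on the same sign information about the entries of $\LL(S)_Q^{-1}$ and its square.
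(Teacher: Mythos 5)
Your proof is correct, but it takes a genuinely different route from the paper's. The paper never reduces to the two-edge case: it writes $\LL(T)_Q = \LL(S)_Q + \MM$ for a nonnegative diagonal $\MM$ absorbing all of $T\setminus S$ at once, sets $f(t) = \trace{\kh{\LL(S)_Q+\MM+t\EE_{uu}}^{-1}} - \trace{\kh{\LL(S)_Q+t\EE_{uu}}^{-1}}$, and shows $f'(t)\ge 0$; the derivative is a difference of $[u,u]$ entries of two squared inverses, whose sign is controlled by proving the entrywise domination $\kh{\LL(S)_Q+t\EE_{uu}}^{-1} \ge \kh{\LL(S)_Q+t\EE_{uu}+\MM}^{-1}$ via repeated Sherman--Morrison rank-one updates combined with Lemma~\ref{fact:inv+}. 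You instead first invoke the standard lattice reduction to the two-edge inequality and then compute the mixed partial $\partial_s\partial_t\,\trace{\MM_{s,t}} = 2w_ew_f(\ee_{u_e}^\top\MM_{s,t}\ee_{u_f})(\ee_{u_e}^\top\MM_{s,t}^2\ee_{u_f})$, which I have verified. Both arguments ultimately rest on the same positivity fact (Lemma~\ref{fact:inv+}, applied block-by-block on the components of the induced subgraph on $V\setminus Q$, each of which touches $Q$ by connectivity of $G$), but they use it differently: the paper needs an entrywise \emph{comparison} of two distinct inverses, while you only need the \emph{sign} of the entries of a single inverse and its square, yielding a cleaner closed form for the second-order difference as a product of two nonnegative scalars. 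What the paper's route buys is handling $S\subset T$ in one shot without the chaining step; what yours buys is a more transparent certificate of supermodularity. One small point worth a sentence in a polished write-up: when $e$ and $f$ share the follower endpoint $u_e=u_f$, your formula degenerates to $2w_ew_f\,\MM_{[u_e,u_e]}\,\norm{\MM\ee_{u_e}}^2\ge 0$, so nothing breaks.
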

\section{Simple Greedy Algorithm}

Since the set function $\mathcal{R}_Q(S)$ is  monotone and supermodular, the optimization problem in~\eqref{pro2} can be approximately solved by a simple greedy algorithm  with a provable optimality bound~\cite{NeWoFi78}.  Initially,  the augmented edge set $S$ is set to be empty. Then $k$ edges are iteratively selected to the augmented edge set from set $E_Q\setminus S$. In each iteration  of the greedy algorithm, the edge $e$ in the candidate set is chosen to maximize the quantity $\mathcal{R}_Q(S)-\mathcal{R}_Q\kh{S \cup \{ e \} }$. The algorithm stops when $k$ edges are selected to be added to $S$.  The computation for $\mathcal{R}_Q(S)$ can be performed according to the following lemma.

\begin{lemma} \label{lem:SMF}
	For a connected weighted graph $G=(V,E,w)$ with a set $Q$ of $q$ target vertices, weighted Laplacian matrix $\LL$, let $e$ be a nonexistent  edge with given weight $w(e)$ connecting two vertices  $v \in V \setminus Q$ and  $u \in Q$,  and let $S$ be the set of added edges. Let $\LL(S)$ denote the Laplacian matrix of the augmented graph. Then,
	\begin{align*}
		\LL(S)_Q=\LL_Q+\sum\limits_{(u,v) \in S} w(u, v)\EE_{uu},
	\end{align*}
	where $\EE_{uu}=\ee_u \ee_u^\top$.
\end{lemma}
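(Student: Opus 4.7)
The plan is to derive the formula by writing down what adding one edge does to the full Laplacian, and then tracking which terms survive when we restrict to the principal submatrix indexed by $V \setminus Q$. The result for a set $S$ follows immediately by linearity (adding edges one by one).

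First I would recall the edge-Laplacian decomposition from the Preliminaries: if we add an edge $e$ joining $u$ and $v$ with weight $w(e)$ to $G$, the new Laplacian is
\begin{align*}
\LL + w(e)\bb_e \bb_e^\top = \LL + w(e)(\ee_u - \ee_v)(\ee_u - \ee_v)^\top.
\end{align*}
Expanding, the rank-one update decomposes into four terms:
\begin{align*}
w(e)\,\ee_u \ee_u^\top \;-\; w(e)\,\ee_u \ee_v^\top \;-\; w(e)\,\ee_v \ee_u^\top \;+\; w(e)\,\ee_v \ee_v^\top.
\end{align*}

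Next I would invoke the definition of the submatrix $\LL_Q$: we drop every row and column whose index lies in $Q$. By hypothesis one endpoint of $e$ lies in $Q$ and the other lies in $V\setminus Q$; let $u$ denote the endpoint in $V\setminus Q$ (matching the indexing in $\EE_{uu}$) and let the other endpoint be in $Q$. Three of the four rank-one terms above have all of their nonzero entries in the row or column indexed by the endpoint in $Q$, so they vanish upon restriction. Only $w(e)\,\ee_u \ee_u^\top = w(e)\,\EE_{uu}$ survives the restriction (and $\ee_u$ is still a standard basis vector in the reduced space because $u \notin Q$). This shows
\begin{align*}
\LL(\{e\})_Q = \LL_Q + w(e)\,\EE_{uu}.
\end{align*}

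Finally, I would extend to an arbitrary set $S$ of added edges by induction on $|S|$, or equivalently observe that the full Laplacian of the augmented graph is $\LL + \sum_{e \in S} w(e)\bb_e \bb_e^\top$ and that restriction to indices in $V \setminus Q$ is a linear operation, so summing the single-edge contributions yields the claimed identity. There is no real obstacle here: the statement is essentially a bookkeeping lemma, and the only care needed is consistent naming of which endpoint of each edge lies in $V\setminus Q$ (so that the indices in $\EE_{uu}$ match the surviving basis vectors).
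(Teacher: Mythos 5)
Your proof is correct. The paper states Lemma~\ref{lem:SMF} without proof (it is treated as an evident bookkeeping fact, implicitly justified by the decomposition $\LL = \sum_{e\in E} w(e)\bb_e\bb_e^\top$ from the Preliminaries), and your derivation --- expanding $w(e)(\ee_u-\ee_v)(\ee_u-\ee_v)^\top$, observing that the three terms touching the endpoint in $Q$ are deleted when passing to the principal submatrix $\LL_Q$, and summing over $S$ by linearity --- is exactly the standard argument one would supply. You were also right to re-name the endpoints so that $u$ is the vertex in $V\setminus Q$: the lemma as printed assigns $u\in Q$, which is inconsistent with the index appearing in $\EE_{uu}$ and with the convention used everywhere else in the paper (e.g., in the update formula for $\mathcal{R}_Q^\Delta(e)$ and in Algorithm~\ref{alg:MVC1}), so your reading is the intended one.
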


Considering $\mathcal{R}_Q(S) ={\rm Tr}(\LL(S)_Q^{-1})$, a na\"{\i}ve greedy algorithm requires $O(k|E_Q|  (n-q)^3)$ time, which is computationally intractable even for  small-size networks. Below we show that the computation time can be  greatly reduced.

At each iteration of the greedy algorithm, only the edge $e$ with maximum $\mathcal{R}_Q(S)-\mathcal{R}_Q\kh{S \cup \{ e \} }$, denoted by $\mathcal{R}_Q^\Delta(e)$,  is chosen. In the  na\"{\i}ve  algorithm, one needs to compute  $\LL(S)_Q^{-1}$ after each update of $S$ in  $O((n-q)^3)$ time. Actually, in each iteration, the new matrix $\LL(S\cup\{e\})_Q$ is a rank one perturbation of the matrix $\LL(S)_Q$, that is, $\LL(S\cup\{e\})_Q^{-1} = \kh{\LL(S)_Q+ w(e)\ee_u\ee_u^\top}^{-1}$. Then, exploiting Sherman-Morrison formula~\cite{Me73}, $\LL(S\cup\{e\})_Q^{-1}$ can be found  by applying a rank one update to $\LL(S)_Q^{-1}$ in time $O((n-q)^2)$, rather than directly  computing the inverse of matrix $\LL(S\cup\{e\})_Q^{-1}$ that takes time $O((n-q)^3)$.  Therefore, the quantity $\mathcal{R}_Q^\Delta(e)$  can be evaluated as
\begin{align}\label{eq:SM}
	&\quad \mathcal{R}_Q^\Delta(e) =  \mathcal{R}(S)-\mathcal{R}(S\cup\{e\})=\LL(S)_Q^{-1}-\LL(S\cup\{e\})_Q^{-1} \notag\\
	&=\LL(S)_Q^{-1}-\left(\LL(S)_Q^{-1}- \frac{w(e) \LL(S)_Q^{-1} \ee_u \ee_u^\top \LL(S)_Q^{-1}}{1 + w(e)\ee_u^\top \LL(S)_Q^{-1}\ee_u} \right) \notag\\
	&=\frac{w(e) \LL(S)_Q^{-1} \ee_u \ee_u^\top \LL(S)_Q^{-1}}{1 + w(e)\ee_u^\top \LL(S)_Q^{-1}\ee_u}
	= \frac{w(e) \norm{\LL(S)_Q^{-1} \ee_u}^2  }{1 + w(e)\ee_u^\top \LL(S)_Q^{-1} \ee_u}.
\end{align}

Equation~\eqref{eq:SM} leads to Algorithm~\ref{alg:MVC1},  $\ExactSM(G, v, E_v, k)$. This algorithm  first computes the inverse of $\LL_{Q}$  in time $O((n-q)^3)$. Then it works in $k$ rounds, with each round mainly including two steps: computing $\mathcal{R}_Q^\Delta(e)$ (Lines 4-8)  in $O(n(n-q))$ time, and updating  $\LL(S)_Q^{-1}$ (Line 13) in $O((n-q)^2)$ time. Thus, the total running time of Algorithm~\ref{alg:MVC1} is $O((n-q)^3+kn(n-q))$, which is much   faster than the na\"{\i}ve   algorithm.

\begin{algorithm}[htbp]
	
	\caption{$\ExactSM(G, Q, E_Q, k)$}
	\label{alg:MVC1}
	\Input{
		A connected graph $G$; a node set $Q \subset V$; a candidate edge set $E_Q$;
		an integer $k \leq |E_Q|$
	}
	\Output{
		$S$: a subset of $E_Q$ and $|S| = k$
	}
	Compute $\LL_Q^{-1}$ \;
	$S = \emptyset $ \;
	\For{$i = 1$ to $k$}{
		$t(u) = \norm{\LL_Q^{-1} \ee_u}^2$ for all $u \in V \setminus Q$\;
		$r(u) = \ee_u^\top \LL_Q^{-1} \ee_u$ for all $u \in V \setminus Q$\;
		\For{ each $e \in E_Q\setminus S$}
		{
			$u=$ the vertex that $e$ connects in set $V \setminus Q $\;
			$\mathcal{R}_Q^\Delta(e)=
			\frac{w(e) t(u) }{1 + w(e) r(u)}$}
		$e_i = \mathrm{arg\,max}_{e \in E_Q \setminus S} \mathcal{R}_Q^\Delta(e)$ \;
		$S = S \cup \{ e_i \}$ \;
		$G = G(V,E \cup \{ e_i \})$ \;
		$u=$ the vertex that $e_i$ connects in set $V \setminus Q $\;
		$\LL_Q^{-1}=\LL_Q^{-1}- \frac{w(e_i) \LL_Q^{-1} \ee_u \ee_u^\top \LL_Q^{-1}}{1 + w(e_i)\ee_u^\top \LL_Q^{-1} \ee_u}$
	}
	\Return $S$
\end{algorithm}

Based on the well-established   result~\cite{NeWoFi78}, Algorithm~\ref{alg:MVC1} yields a  $(1-1/e)$-approximation of the optimal solution to the  problem in~\eqref{pro2}, as provided in the following theorem.
\begin{theorem}
	The edge set $S$ returned by Algorithm~\ref{alg:MVC1} satisfies relation
	\begin{align*}
		\mathcal{R}_Q(\emptyset) - \mathcal{R}_Q(S) \geq (1-1/e)(\mathcal{R}_Q(\emptyset) - \mathcal{R}_Q(S^*)),
	\end{align*}
	where $S^*$ is the optimal solution to~(\ref{pro2}), that is,
	\begin{align*}
		S^* \, \, \defeq \underset{S\subset E_Q,\, \sizeof{S}=k}{\mathrm{arg\,min}} \quad \mathcal{R}_Q(S) .
	\end{align*}
\end{theorem}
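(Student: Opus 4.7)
The plan is to reduce the claim to the classical Nemhauser--Wolsey--Fisher guarantee for greedy maximization of a monotone submodular function under a cardinality constraint. Define the gain function $f : 2^{E_Q} \to \mathbb{R}$ by
\[
f(S) \defeq \mathcal{R}_Q(\emptyset) - \mathcal{R}_Q(S).
\]
Then the desired inequality $\mathcal{R}_Q(\emptyset) - \mathcal{R}_Q(S) \geq (1 - 1/e)(\mathcal{R}_Q(\emptyset) - \mathcal{R}_Q(S^*))$ is exactly $f(S) \geq (1 - 1/e) f(S^*)$, where $S^*$ is the maximizer of $f$ subject to $|S|=k,\, S \subseteq E_Q$.

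Next I would verify the three hypotheses required by the Nemhauser--Wolsey--Fisher theorem for $f$. First, $f(\emptyset) = 0$ is immediate. Second, $f$ is monotonically \emph{increasing}: if $S \subset T$, then by Theorem~\ref{thm:MI} we have $\mathcal{R}_Q(T) < \mathcal{R}_Q(S)$, so $f(T) > f(S)$. Third, $f$ is \emph{submodular}: for any $S \subset T \subseteq E_Q$ and $e \in E_Q \setminus T$, Theorem~\ref{thm:SM} yields
\[
\mathcal{R}_Q(T) - \mathcal{R}_Q(T \cup \{e\}) \leq \mathcal{R}_Q(S) - \mathcal{R}_Q(S \cup \{e\}),
\]
which, after substituting the definition of $f$, is precisely the diminishing-returns inequality $f(T \cup \{e\}) - f(T) \leq f(S \cup \{e\}) - f(S)$.

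The final step is to observe that Algorithm~\ref{alg:MVC1} is exactly the standard greedy procedure for maximizing $f$ under the cardinality constraint $|S| = k$: in each iteration it picks the edge $e \in E_Q \setminus S$ maximizing $\mathcal{R}_Q^\Delta(e) = \mathcal{R}_Q(S) - \mathcal{R}_Q(S \cup \{e\}) = f(S \cup \{e\}) - f(S)$, i.e., the marginal gain of $f$. Hence the Nemhauser--Wolsey--Fisher theorem~\cite{NeWoFi78} directly applies and yields $f(S) \geq (1 - 1/e) f(S^*)$, which is the claimed bound.

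There is essentially no hard step here, since Theorems~\ref{thm:MI} and \ref{thm:SM} have already done the real work of establishing monotonicity and supermodularity of $\mathcal{R}_Q$; the only points that require any care are the sign flip when passing from the decreasing, supermodular function $\mathcal{R}_Q$ to the increasing, submodular gain function $f$, and the explicit verification that the quantity the algorithm greedily maximizes at each step is indeed the marginal gain of $f$ rather than of $\mathcal{R}_Q$. Once these bookkeeping details are checked, the approximation bound follows immediately from the classical result.
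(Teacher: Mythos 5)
Your proposal is correct and matches the paper's approach: the paper likewise obtains this theorem by directly invoking the Nemhauser--Wolsey--Fisher guarantee~\cite{NeWoFi78} on the strength of Theorems~\ref{thm:MI} and~\ref{thm:SM}, without writing out the reduction. Your explicit verification of the sign flip from the decreasing supermodular $\mathcal{R}_Q$ to the normalized, increasing, submodular gain function $f$, and of the fact that the greedy step maximizes the marginal gain of $f$, supplies exactly the bookkeeping the paper leaves implicit.
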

\section{Fast Greedy Algorithm}

Although the computation time of Algorithm~\ref{alg:MVC1} is significantly reduced, compared with the na\"{\i}ve algorithm, it is still  computationally unacceptable for large networks with millions of vertices, since it requires computing the inverse of  matrix $\LL_Q$. Below we  present  an efficient  approximation algorithm, which avoids inverting the matrix $\LL_Q$ but returns a  $(1 - 1/e-\eps)$ approximation  of the optimal solution to  problem~(\ref{pro2}) in time $O(m \eps^{-2} \log^{2.5} n \log \eps^{-1} \operatorname{polyloglog}(n)+qn)$.

The key step for  solving the  problem in~(\ref{pro2}) is to compute the quantity  $\mathcal{R}_Q^\Delta(e)$. According to~\eqref{eq:SM}, to evaluate  $\mathcal{R}_Q^\Delta(e)$, one needs to estimate the  two terms $\norm{\LL(S)_Q^{-1} \ee_u}^2$ and $\ee_u^\top \LL(S)_Q^{-1} \ee_u$ in numerator and denominator, respectively. In the next two subsections, we provide efficient approximations for these two quantities  $\mathcal{R}_Q^\Delta(e)$.

%In order to solve problem~(\ref{pro}), one need to compute the key quantity  $\mathcal{R}_Q^\Delta(e)$ in~\eqref{eq:SM}. Here, we provide an efficient algorithm to approximate  $\mathcal{R}_Q^\Delta(e)$ properly. To approximate $\mathcal{R}_Q^\Delta(e)$, we mainly need to approximate the following quantities:
%\begin{align}
%	\norm{\LL(S)_Q^{-1} \ee_u}^2 \label{shi1}
%\end{align}
%and
%\begin{align}
%	\ee_u^\top \LL(S)_Q^{-1} \ee_u. \label{shi2}
%\end{align}

\subsection{Approximating the Norm in~\eqref{eq:SM}}

We first approximate $\norm{\LL(S)_Q^{-1} \ee_u}^2$. It is the $\ell_2$ norm of  a vector $\LL(S)_Q^{-1} \ee_u$ in $\mathbb{R}^{n-q}$.  However, the complexity for exactly computing this  $\ell_2$ norm  is high. To reduce the computation cost, we will  apply the Johnson-Lindenstrauss (JL) lemma~\cite{JoLi84,Ac03},  which nearly  preserves  the $\ell_2$ norm by projecting the vector $\LL(S)_Q^{-1} \ee_u$ onto a low-dimensional subspace, but significantly reduces the computational cost. For consistency, we  introduce the JL lemma~\cite{JoLi84,Ac03}.
\begin{lemma}
	\label{lem:jl}
	Let $\vv_1,\vv_2,\cdots,\vv_n \in \mathbb{R}^d$ be $n$ fixed $d-$dimensional  vectors and $ \eps > 0$ be a real number.
	Let $p$ be a positive integer such that $p \geq 24\log n/\eps^2$ and let $\QQ_{p \times d}$ be a random matrix with each entry being $1/\sqrt{p}$ or $-1/\sqrt{p}$ with identical probability. Then, with probability at least $1 - 1/n$, the following statement holds for any pair of $i$ and $j$, $1\leq i, j\leq n$:
	\[
	\norm{\QQ \vv_i - \QQ \vv_j}^2 \approx_{\eps} \norm{\vv_i - \vv_j}^2.
	\]
\end{lemma}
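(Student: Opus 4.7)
The plan is to prove the Johnson–Lindenstrauss lemma in the classical way: reduce the uniform statement over $\binom{n}{2}$ pairs to a one-vector concentration estimate via linearity, and then close everything with a Chernoff–type tail bound plus union bound.

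First, I would exploit linearity of $\QQ$: since $\QQ\vv_i - \QQ\vv_j = \QQ(\vv_i-\vv_j)$, the pairwise statement reduces to showing that, for any fixed vector $\vv\in\mathbb{R}^d$,
\[
\Pr\bigl[\,\bigl|\|\QQ\vv\|^2 - \|\vv\|^2\bigr| > \eps\|\vv\|^2\,\bigr] \leq \tfrac{2}{n^3},
\]
and then taking a union bound over the at most $\binom{n}{2}\le n^2/2$ difference vectors $\vv_i-\vv_j$. This also removes the need to worry about the zero vector, for which the claim is trivial.

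Second, I would set up the single-vector concentration estimate. Write $\QQ_{[k,\ell]} = \sigma_{k\ell}/\sqrt{p}$ with the $\sigma_{k\ell}$ independent Rademacher random variables, and let $Y_k = (\QQ\vv)_k = p^{-1/2}\sum_{\ell}\sigma_{k\ell}\vv_{\ell}$. Using $\mathbb{E}[\sigma_{k\ell}\sigma_{k\ell'}]=\delta_{\ell\ell'}$ one immediately gets $\mathbb{E}[Y_k^2] = \|\vv\|^2/p$, hence $\mathbb{E}[\|\QQ\vv\|^2] = \|\vv\|^2$, so we are concentrating around the correct mean. Normalizing, the variables $Z_k = \sqrt{p}\,Y_k/\|\vv\|$ are independent, mean zero, and sub-Gaussian with parameter $1$ (as Rademacher-weighted sums of unit-norm coefficients). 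Then $\|\QQ\vv\|^2/\|\vv\|^2 = p^{-1}\sum_k Z_k^2$ is the average of $p$ i.i.d.\ squared sub-Gaussian variables, and a standard moment-generating-function computation (bounding $\mathbb{E}[\exp(\lambda Z_k^2)]$ for $\lambda$ small and optimizing in $\lambda$) yields the two-sided tail bound
\[
\Pr\bigl[\,\bigl|\|\QQ\vv\|^2 - \|\vv\|^2\bigr| > \eps\|\vv\|^2\,\bigr] \leq 2\exp(-p\eps^2/8),
\]
valid for $0<\eps\le 1/4$ (which is exactly the regime assumed by the paper's Definition of $\eps$-approximation).

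Third, I would plug in the hypothesis $p \ge 24\log n/\eps^2$ to get $2\exp(-p\eps^2/8) \le 2\exp(-3\log n) = 2/n^3$, and then the union bound over the $\le n^2/2$ difference vectors yields total failure probability $\le 1/n$, which is the conclusion. The main obstacle is the concentration step: since $Z_k^2$ is an unbounded squared Rademacher sum, Hoeffding does not apply directly and one has to work through the sub-exponential MGF bound (or invoke Hanson–Wright for Rademacher quadratic forms) to obtain the correct exponent $p\eps^2/8$; pinning this constant down is precisely what calibrates the threshold $p\ge 24\log n/\eps^2$ in the statement.
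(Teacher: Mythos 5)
The paper does not prove this lemma at all: it is imported verbatim as the Johnson--Lindenstrauss lemma with binary coins, with a citation to Johnson--Lindenstrauss and to Achlioptas, so there is no in-paper proof to compare against. Your sketch is the standard proof of that cited result and is essentially sound: the reduction via linearity to a single-vector bound, the computation $\mathbb{E}\,\norm{\QQ\vv}^2=\norm{\vv}^2$, the sub-exponential tail bound $2\exp(-p\eps^2/8)$, and the union bound over at most $n^2/2$ difference vectors giving failure probability $1/n$ all fit together correctly, and the exponent you claim is safely below the $\eps^2/4-\eps^3/6$ that Achlioptas actually establishes for $\eps\le 3/4$. Two small calibration points are worth flagging. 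First, as you yourself note, the constant in the exponent for the \emph{Rademacher} case does not follow from a generic ``squared sub-Gaussian'' Bernstein bound; one needs either the direct MGF computation for Rademacher-weighted sums or Achlioptas's moment-domination argument comparing to the Gaussian case, so that step is the real content and would need to be written out. Second, the paper's Definition of $a\approx_\eps b$ is $(1-\eps)a\le b\le(1+\eps)a$ with $a=\norm{\QQ\vv_i-\QQ\vv_j}^2$, i.e.\ the relative error is measured against the \emph{projected} norm, whereas your concentration statement centers the error at $\norm{\vv}^2$; converting between the two costs a factor like $1/(1-\eps)$ on one side, which for $\eps\le 1/4$ is absorbed by the slack in the constant $24$ (your per-pair bound is really $O(n^{-5})$), but the conversion should be stated explicitly rather than left implicit.
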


Let $\PP_{p\times n}$ be a random $\pm1/\sqrt{p}$ matrix with $p =\ceil{24\log n/ \eps^2}$. By Lemma~\ref{lem:jl}, we have 
\begin{align}
	\norm{\PP \LL(S)_Q^{-1} \ee_u}^2 \approx_{\eps} \norm{\LL(S)_Q^{-1} \ee_u}^2
\end{align}
holds for any $ u\in V$ with high probability.  However, if we directly compute $\norm{\PP \LL(S)_Q^{-1} \ee_u}^2$,  we should invert  $\LL(S)_Q$, which is time-consuming. In order to avoid matrix inverse, we use the fast symmetric, diagonally dominant (SDD) linear system solvers~\cite{SpTe14,CoKyMiPaJaPeRaXu14} to compute $\PP \LL(S)_Q^{-1}$.
\begin{lemma}\label{lem:solve}
	There is a nearly linear time solver $\xx = \SDDMSolver(\SS, \bb, \delta)$ which takes an SDD matrix $\SS_{n\times n}$ with $m$ nonzero entries, a vector $\bb \in \mathbb{R}^n$, and an error parameter $\delta > 0$, and returns a vector $\xx \in \mathbb{R}^n$ satisfying $\norm{\xx - \SS^{-1} \bb}_{\SS} \leq \delta \norm{\SS^{-1} \bb}_{\SS}$ with high probability, where $\norm{\xx}_{\SS} \defeq \sqrt{\xx^\top \SS \xx}$. The solver runs in expected time $O(m\log^{0.5}n \log\delta^{-1} \operatorname{polyloglog}(n))$ .
\end{lemma}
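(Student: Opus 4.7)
The plan is to follow the nearly-linear time Laplacian/SDD solver framework established by Spielman and Teng~\cite{SpTe14} and subsequently refined in~\cite{CoKyMiPaJaPeRaXu14}. The overall strategy is to combine a preconditioned iterative method, such as preconditioned Chebyshev iteration, with a recursively constructed hierarchy of sparse preconditioners obtained from low-stretch spanning trees, so that each inner solve is inexpensive and the outer iteration converges geometrically in the $\SS$-norm of the error vector.

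First, I would reduce the SDD system $\SS \xx = \bb$ to a graph Laplacian system. Any SDD matrix can be written as a graph Laplacian plus a nonnegative diagonal, and a standard doubling transformation embeds such a matrix as a Laplacian on $2n$ nodes while preserving sparsity and spectral properties up to constant factors. Because this reduction is lossless with respect to the relative error measured in the $\SS$-norm, it suffices to establish the claim for Laplacian systems $\LL \xx = \bb$.

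Next, I would build a preconditioner $\tilde\LL$ from a low-stretch spanning tree, augmented with $O(m/\kappa)$ off-tree edges sampled proportionally to their stretch. Standard constructions give an ultrasparsifier with $\LL \preceq \tilde\LL \preceq \widetilde{O}(\kappa)\,\LL$, so a single outer step of preconditioned Chebyshev iteration with $\tilde\LL$ contracts the $\LL$-norm of the error by a factor $1-\Theta(1/\sqrt{\kappa})$, and therefore $O(\sqrt{\kappa}\log\delta^{-1})$ outer iterations suffice to reach relative error $\delta$. Each inner solve against $\tilde\LL$ is handled recursively: the preconditioner has $n + O(m/\kappa)$ edges, so choosing $\kappa$ to make the problem size shrink geometrically down the recursion produces the $O(m\log^{0.5} n \log\delta^{-1}\operatorname{polyloglog}(n))$ runtime stated in~\cite{CoKyMiPaJaPeRaXu14}. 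The output vector $\xx$ is then the final Chebyshev iterate, and the $\SS$-norm guarantee follows from tracking the contraction factor across all outer iterations.

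The main obstacle, and the reason we defer the full argument to~\cite{SpTe14,CoKyMiPaJaPeRaXu14}, is the delicate accounting needed to make the recursion self-consistent: the number of off-tree edges retained at each level, the tolerance handed to each inner solve, and the choice of $\kappa$ must all be coupled so that the accumulated $\SS$-norm error over all recursive levels stays within $\delta$ while the per-level work telescopes to nearly linear. Obtaining the specific $\log^{0.5} n$ factor (rather than the older $\log n$) further depends on the improved low-stretch spanning trees and the sharpened analysis of preconditioned Chebyshev iteration developed in~\cite{CoKyMiPaJaPeRaXu14}, which we would invoke as a black box rather than reprove here.
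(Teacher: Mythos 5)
Your proposal is consistent with the paper, which offers no proof of this lemma at all and simply cites it as a known result from the Spielman--Teng line of work and its refinement by Cohen et al.; your sketch of the recursive preconditioning framework accurately summarizes what those references establish, and like the paper you ultimately invoke them as a black box for the $\log^{0.5} n$ bound. No gap to report.
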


Using Lemmas \ref{lem:solve} and \ref{lem:jl}, $\PP \LL(S)_Q^{-1}$ can be efficiently approximated as stated in the following lemma.
\begin{lemma}\label{lem:z1-zhat}
	Let $\ZZ^{(1)} = \PP \LL(S)_Q^{-1}\ee_u$. If there exists a matrix $\ZZtil^{(1)}$ satisfying $\ZZtil^{(1)}_{[i,:]} = \SDDMSolver(\LL(S)_Q, \PP_{[i,:]}, \delta_1 )$, where
	\begin{align}
		\delta_1 \leq \frac{\eps \sqrt{1-\eps}w_{\rm min}}{6n^3w_{\rm max}}\label{DLT1}
		%\delta_1 \leq \sqrt{ \frac{(\eps - \eps^2) w_{\rm min}^2}{ 6 n^{9} w_{\rm max}^2}}. \label{DLT1}
	\end{align}
	Then,
	\begin{align*}
		\norm{\ZZ^{(1)} \ee_u}^2 \approx_{\eps} \norm{\ZZtil^{(1)} \ee_u}^2
	\end{align*}
	holds for any $ u\in V$ with high probability.
\end{lemma}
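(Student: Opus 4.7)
The plan is to bound the row-wise errors produced by the SDD solver, convert these to a bound on $\|\ZZ^{(1)} \ee_u - \ZZtil^{(1)} \ee_u\|$, and then apply the triangle inequality together with a lower bound on $\|\ZZ^{(1)} \ee_u\|$ that follows from the Johnson-Lindenstrauss projection. The careful choice of $\delta_1$ in (\ref{DLT1}) is designed precisely so that the additive solver error is a small fraction of this lower bound.

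For each $i \in \{1, \ldots, p\}$, write $\xx_i := \LL(S)_Q^{-1}(\PP_{[i,:]})^\top$ and let $\widetilde{\xx}_i$ denote the vector returned by $\SDDMSolver(\LL(S)_Q, \PP_{[i,:]}, \delta_1)$, so that the $i$-th rows of $\ZZ^{(1)}$ and $\ZZtil^{(1)}$ equal $\xx_i^\top$ and $\widetilde{\xx}_i^\top$ respectively. By Lemma~\ref{lem:solve}, $\|\widetilde{\xx}_i - \xx_i\|_{\LL(S)_Q} \leq \delta_1 \|\xx_i\|_{\LL(S)_Q}$ holds for every $i$ with high probability (a union bound over the $p$ solver calls is cheap). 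Applying the generalized Cauchy-Schwarz inequality $|\yy^\top \zz| \leq \|\yy\|_{\MM^{-1}} \|\zz\|_{\MM}$ with $\MM = \LL(S)_Q$ to $\ee_u^\top(\widetilde{\xx}_i - \xx_i)$ yields
\begin{equation*}
\bigl|\ee_u^\top(\widetilde{\xx}_i - \xx_i)\bigr|^2 \leq \bigl(\ee_u^\top \LL(S)_Q^{-1} \ee_u\bigr)\,\delta_1^2\,\|\xx_i\|_{\LL(S)_Q}^2.
\end{equation*}
Summing over $i$, using $\sum_i \|\xx_i\|_{\LL(S)_Q}^2 = \trace{\PP \LL(S)_Q^{-1} \PP^\top} \leq \|\PP\|_F^2/\lambda_{\min}(\LL(S)_Q)$ together with $\|\PP\|_F^2 = n-q$ and the spectral bounds $\lambda_{\min}(\LL(S)_Q) \geq w_{\min}/n^2$, $\lambda_{\max}(\LL(S)_Q) \leq n^2 w_{\max}$ recalled in the preliminaries, I obtain an explicit polynomial bound of the form $\|\ZZ^{(1)} \ee_u - \ZZtil^{(1)} \ee_u\| \leq C\,\delta_1\,n^{a}\,w_{\max}/w_{\min}$ for an absolute constant $C$ and a small integer exponent $a$.

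For a matching lower bound on $\|\ZZ^{(1)} \ee_u\|$, I chain $\|\LL(S)_Q^{-1} \ee_u\| \geq \ee_u^\top \LL(S)_Q^{-1} \ee_u \geq 1/\lambda_{\max}(\LL(S)_Q) \geq 1/(n^2 w_{\max})$, and apply Lemma~\ref{lem:jl} to conclude $\|\ZZ^{(1)} \ee_u\| \geq \sqrt{1-\eps}/(n^2 w_{\max})$ with high probability. Plugging the hypothesis (\ref{DLT1}) into the bound above forces $\|\ZZ^{(1)} \ee_u - \ZZtil^{(1)} \ee_u\| \leq (\eps/3)\,\|\ZZ^{(1)} \ee_u\|$, whence the ordinary triangle inequality in the $\ell_2$ norm gives $(1-\eps/3)\|\ZZ^{(1)} \ee_u\| \leq \|\ZZtil^{(1)} \ee_u\| \leq (1+\eps/3)\|\ZZ^{(1)} \ee_u\|$; squaring preserves the claimed $\eps$-approximation for $\eps \leq 1/4$, and a final union bound absorbs the JL and solver failure events into the single ``with high probability'' statement.

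The principal obstacle is the careful accounting of the polynomial factors of $n$, $w_{\min}$, and $w_{\max}$ through the chain of inequalities, so that the specific constant in (\ref{DLT1}) is indeed tight enough to drive the additive error below the prescribed fraction of the lower bound on $\|\ZZ^{(1)} \ee_u\|$; once that bookkeeping is in place, the triangle-inequality step and the probability accounting are routine.
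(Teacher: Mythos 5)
Your proposal is correct and follows essentially the same route as the paper's proof: bound the solver-induced perturbation of $\ZZtil^{(1)}\ee_u$ via the $\LL(S)_Q$-norm guarantee, the trace identity $\sum_i\norm{\xx_i}_{\LL(S)_Q}^2=\trace{\PP\LL(S)_Q^{-1}\PP^\top}$, and the spectral bounds $\lambda_{\min}\geq w_{\min}/n^2$, $\lambda_{\max}\leq n^2w_{\max}$; lower-bound $\norm{\ZZ^{(1)}\ee_u}$ through Johnson--Lindenstrauss and $\ee_u^\top\LL(S)_Q^{-2}\ee_u\geq 1/\lambda_{\max}^2$; then finish with the triangle inequality and squaring. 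The only cosmetic difference is that you isolate each entry error via generalized Cauchy--Schwarz in the $\LL(S)_Q$ inner product (picking up a factor $\ee_u^\top\LL(S)_Q^{-1}\ee_u\leq 1/\lambda_{\min}$), whereas the paper bounds $\norm{(\ZZ^{(1)}-\ZZtil^{(1)})\ee_u}$ by the Frobenius norm and pays the same $1/\lambda_{\min}$ there; both yield the same $\delta_1^2 n^5/w_{\min}^2$ bound.
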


\subsection{Approximating the Quadratic Form in~\eqref{eq:SM}}

We continue to approximate $\ee_u^\top \LL(S)_Q^{-1}\ee_u$.  Note that  $\LL(S)_Q$ is an SDD  matrix and  can be expressed   in terms of the  sum of a Laplacian $\BB'^\top \WW' \BB'$ and a nonnegative diagonal matrix $\XX$  as $\LL(S)_Q = \BB'^\top \WW' \BB' + \XX$. Then,
\begin{align}\label{eq:deno}
	&\ee_u^\top \LL(S)_Q^{-1} \ee_u
	= \ee_u^\top \LL(S)_Q^{-1} \LL(S)_Q \LL(S)_Q^{-1} \ee_u \notag \\
	=& \ee_u^\top \LL(S)_Q^{-1} \kh{\BB'^\top \WW' \BB' + \XX}
	\LL(S)_Q^{-1} \ee_u \notag \\
	=&
	\ee_u^\top \LL(S)_Q^{-1} \BB'^\top \WW' \BB'
	\LL(S)_Q^{-1} \ee_u +  \ee_u^\top \LL(S)_Q^{-1} \XX
	\LL(S)_Q^{-1} \ee_u \notag\\
	=&
	\norm{\WW'^{1/2} \BB' \LL(S)_Q^{-1} \ee_u}^2 +
	\norm{\XX^{1/2} \LL(S)_Q^{-1} \ee_u}^2.
\end{align}
Thus,  the determination of  $\ee_u^\top \LL(S)_Q^{-1}\ee_u$ can be  reduced to evaluating $\ell_2$ norm of  vectors  in $\mathbb{R}^m$ and $\mathbb{R}^n$.  Let $\QQ_{p\times m}$ and $\RR_{p\times n}$ be two random $\pm1/\sqrt{p}$ matrices where $p = \ceil{24\log n/ \eps^2}$. By Lemma~\ref{lem:jl}, for any $ u\in V$ we have
\begin{align*}
	 \ee_u^\top \LL(S)_Q^{-1} \ee_u 
	\approx_{\eps} 
	\norm{\QQ \WW'^{1/2} \BB' \LL(S)_Q^{-1} \ee_u}_2^2 +
	\norm{\RR \XX^{1/2} \LL(S)_Q^{-1} \ee_u}_2^2.
\end{align*}

Applying Lemmas~ \ref{lem:jl}  and~\ref{lem:solve}, the above two formulas can be approximated as follows.
\begin{lemma}\label{lem:z3-zhat}
	Let $\ZZ^{(2)}=\QQ \WW'^{1/2} \BB'\LL_Q^{-1}$, $\ZZ^{(3)}= \RR \XX^{1/2} \LL_Q^{-1}$. If there are two matrices $\ZZtil^{(2)}$ and $\ZZtil^{(3)}$ satisfying
	\begin{align*}
		\ZZtil^{(2)}_{[i,:]} =& \SDDMSolver\kh{\LL(S)_Q,(\QQ \WW'^{1/2} \BB')_{[i,:]}, \delta_2} \quad \rm and \\
		\ZZtil^{(3)}_{[i,:]} =& \SDDMSolver\kh{\LL(S)_Q,(\RR \XX^{1/2})_{[i,:]}, \delta_2}, \quad \rm where
	\end{align*}
%	where
	\begin{align}
		\delta_2 \leq \sqrt{\frac{\eps w_{\rm min}^2}{16n^5m^2} \sqrt{\frac{2-2\eps}{w_{\rm max}}}}. \label{DLT2}
	\end{align}
	Then,  for any $u\in V$, the following relation holds:
	\begin{align*}
		\norm{\ZZ^{(2)} \ee_u}^2 + \norm{\ZZ^{(3)} \ee_u}^2 \approx_{\eps}\norm{\ZZtil^{(2)} \ee_u}^2 + \norm{\ZZtil^{(3)} \ee_u}^2.
	\end{align*}
\end{lemma}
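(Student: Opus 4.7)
The plan is to mirror the argument of Lemma~\ref{lem:z1-zhat}, but adapted to a \emph{sum} of two squared norms rather than a single one. The first step is to establish a uniform lower bound on the target quantity. Applying Lemma~\ref{lem:jl} to the projections $\QQ$ and $\RR$ and using the decomposition in~\eqref{eq:deno}, one obtains
\begin{align*}
\norm{\ZZ^{(2)}\ee_u}^2 + \norm{\ZZ^{(3)}\ee_u}^2 \approx_\eps \ee_u^\top \LL(S)_Q^{-1}\ee_u,
\end{align*}
and since $\lambda_{\max}(\LL(S)_Q)\leq n^2 w_{\max}$, the sum is at least $(1-\eps)/(n^2 w_{\max})$.

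Second, for each $j\in\{2,3\}$ I would reproduce the calculation in Lemma~\ref{lem:z1-zhat} to bound the perturbation: $\abs{\norm{\ZZ^{(j)}\ee_u} - \norm{\ZZtil^{(j)}\ee_u}}^2$ is controlled by $\norm{\ZZ^{(j)}-\ZZtil^{(j)}}_F^2$, which is passed from the Euclidean norm to the $\LL(S)_Q$-norm via $\lambda_{\min}(\LL(S)_Q)\geq w_{\min}/n^2$ and then bounded row by row using the solver guarantee of Lemma~\ref{lem:solve}:
\begin{align*}
\norm{\ZZ^{(j)} - \ZZtil^{(j)}}_F^2 \leq \frac{\delta_2^2}{\lambda_{\min}(\LL(S)_Q)}\sum_{i=1}^p \norm{(\ZZ^{(j)})^\top \ee_i}_{\LL(S)_Q}^2.
\end{align*}
The trace sum for $j=2$ is $\trace{\QQ^\top\QQ\,\WW'^{1/2}\BB'\LL(S)_Q^{-1}\BB'^\top\WW'^{1/2}}$, which is at most $\trace{\QQ^\top\QQ}=m$ because $\BB'^\top\WW'\BB'\preceq \LL(S)_Q$ forces the accompanying PSD factor to have operator norm at most $1$; the analogous bound using $\XX\preceq \LL(S)_Q$ gives at most $\trace{\RR^\top\RR}\leq n$ for $j=3$.

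Third, I substitute the choice of $\delta_2$ from~\eqref{DLT2} to conclude that for both $j\in\{2,3\}$,
\begin{align*}
\abs{\norm{\ZZ^{(j)}\ee_u} - \norm{\ZZtil^{(j)}\ee_u}} \leq \frac{\eps}{6}\sqrt{\norm{\ZZ^{(2)}\ee_u}^2 + \norm{\ZZ^{(3)}\ee_u}^2}.
\end{align*}
Converting from norm to squared-norm differences via $a^2-b^2=(a-b)(a+b)$ and summing over $j$ then delivers the desired $\eps$-approximation.

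The main obstacle will be the bookkeeping across the sum: since either $\norm{\ZZ^{(2)}\ee_u}^2$ or $\norm{\ZZ^{(3)}\ee_u}^2$ could individually be arbitrarily small while the sum is not, each perturbation must be compared against the sum rather than its own term. This is why~\eqref{DLT2} carries an extra $m^2$ factor relative to the single-norm bound~\eqref{DLT1}, and also why the weaker lower bound $1/(n^2 w_{\max})$ on the quadratic form $\ee_u^\top \LL(S)_Q^{-1}\ee_u$ replaces the tighter lower bound on $\norm{\LL(S)_Q^{-1}\ee_u}^2$ used in Lemma~\ref{lem:z1-zhat}, producing the additional $\sqrt{w_{\max}}$ dependence.
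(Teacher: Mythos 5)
Your proposal follows the same skeleton as the paper's argument: lower-bound the sum $\norm{\ZZ^{(2)}\ee_u}^2+\norm{\ZZ^{(3)}\ee_u}^2$ by $(1-\eps)\ee_u^\top\LL(S)_Q^{-1}\ee_u\geq(1-\eps)/(n^2w_{\max})$ via Lemma~\ref{lem:jl} and the decomposition~\eqref{eq:deno}, control each $\abs{\norm{\ZZ^{(j)}\ee_u}-\norm{\ZZtil^{(j)}\ee_u}}$ through the Frobenius norm, $\lambda_{\min}(\LL(S)_Q)\geq w_{\min}/n^2$, and the $\SDDMSolver$ guarantee, compare each perturbation against the combined sum rather than its own term, and finish with $a^2-b^2=(a-b)(a+b)$. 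The one place you genuinely depart from the paper is the trace-sum bound: you use $\BB'^\top\WW'\BB'\preceq\LL(S)_Q$ and $\XX\preceq\LL(S)_Q$ to conclude that $\WW'^{1/2}\BB'\LL(S)_Q^{-1}\BB'^\top\WW'^{1/2}\preceq\II$ (and likewise for $\XX$), giving $\sum_i\norm{(\ZZ^{(2)})^\top\ee_i}_{\LL(S)_Q}^2\leq\trace{\QQ^\top\QQ}=m$, whereas the paper applies the cruder product-of-traces estimate $\trace{\QQ^\top\QQ}\trace{\BB'^\top\WW'\BB'}\trace{\LL(S)_Q^{-1}}=O(n^2m^2w_{\max}/w_{\min})$. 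Your Loewner-order argument is correct and strictly tighter, so the stated threshold~\eqref{DLT2} remains more than sufficient; it buys a cleaner verification (and would in principle permit a larger $\delta_2$, hence a cheaper solver call), at the cost of invoking the similarity of the spectra of $\NN\NN^\top$ and $\NN^\top\NN$ rather than elementary trace inequalities. The closing remarks about why the comparison must be against the sum and where the extra factors in~\eqref{DLT2} come from match the role these quantities play in the paper's own calculation.
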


\subsection{Fast Algorithm for Approximating $\mathcal{R}_Q^\Delta(e)$}

Based on Lemmas~\ref{lem:z1-zhat} and~\ref{lem:z3-zhat}, we propose an algorithm $\GainsEst$ approximating $\mathcal{R}_Q^\Delta(e)$ of every edge in the candidate set $E_Q$. The outline of  algorithm  $\GainsEst$ is shown in Algorithm \ref{alg:MVC2}, and its performance is given in Theorem \ref{lem:edgebyjl}.

%%%%%%%%%%%%%%%%%%%%%%%%%%%%%%%%%%%%%%%
\begin{algorithm}[htbp]
	\caption{$\GainsEst\kh{G,\LL(S)_Q, Q, E_Q, \eps}$}
	\label{alg:MVC2}
	\Input{
		A graph $G$; a sparse matrix $\LL_Q$; a node set $Q \subset V$; a candidate edge set $E_Q$; a real number $0 \leq \epsilon \leq 1/4$
	}
	\Output{
		$\{ (e,\hat{\mathcal{R}}_Q^\Delta(e)) | e \in E_Q \}$
	}
	
	set $\delta_1$ and $\delta_2$ according to (\ref{DLT1}) and (\ref{DLT2}) \;
	$p =\ceil{24\log n/ \eps^2}$ \;
	Compute sparse matrices $\WW'^{1/2}$, $\BB'$ and $\XX^{1/2}$\;
	Construct three random $\pm1/\sqrt{p}$ matrices $\PP_{p\times n}$, $\QQ_{p\times m}$ and $\RR_{p\times n}$\;
	$\boldsymbol{\mathit{Y}}^{(1)}=\QQ \WW'^{1/2} \BB'$ , $\boldsymbol{\mathit{Y}}^{(2)}=\RR \XX^{1/2}$\;
	\For{$i = 1$ to $p$}{
		$\ZZtil^{(1)}_{[i,:]} = \SDDMSolver(\LL(S)_Q,\PP_{[i,:]}, \delta_1)$\;
		$\ZZtil^{(2)}_{[i,:]} = \SDDMSolver(\LL(S)_Q,\boldsymbol{\mathit{Y}}^{(1)}_{[i,:]}, \delta_2)$\;
		$\ZZtil^{(3)}_{[i,:]} = \SDDMSolver(\LL(S)_Q,\boldsymbol{\mathit{Y}}^{(2)}_{[i,:]}, \delta_2)$
	}
	\For{each $e\in E_Q$}{
		$u=$ the vertex that $e$ connects in set $V \setminus Q $\;
		$\hat{\mathcal{R}}_Q^\Delta(e) = \frac{w(e)\norm{\ZZtil^{(1)}\ee_u}^2}{1+w(e)\left({\norm{\ZZtil^{(2)}\ee_u}^2 + \norm{\ZZtil^{(3)}\ee_u}^2}\right)}$
	}
	\Return $\{ (e,\hat{\mathcal{R}}_Q^\Delta(e)) | e \in E_Q \}$.
\end{algorithm}
%%%%%%%%%%%%%%%%%%%%%%%%%%%%%%%%%%%%%%%%%%%%%

\begin{theorem}
	\label{lem:edgebyjl}
	Given a connected undirected graph $G = (V,E,w)$ with $n$ vertices, $m$ edges, positive edge weights
	$w : E \to \mathbb{R}_{+}$, a set $Q\subseteq V$ of $q$ target vertices, a set $E_Q$ of edges, each connecting one vertex in $Q\subseteq V$ and one vertex in $V  \setminus Q$,
	and scalars $0<\epsilon \leq1/4$,
	the algorithm $\GainsEst$ returns a set of pairs $\{(e_u, \hat{\mathcal{R}}_Q^\Delta(e_u)) | e_u \in E_Q\}$.
	With high probability, the following statement holds for $\forall e \in E_Q$
	\begin{align}\label{eq:del}
		\mathcal{R}_Q^\Delta(e) \approx_{3\epsilon} \hat{\mathcal{R}}_Q^\Delta(e).
	\end{align}
	The total running time of this algorithm is bounded by \\
	$O(m \eps^{-2} \log^{2.5} n \log \eps^{-1} \operatorname{polyloglog}(n)+q(n-q)\eps^{-2} \log n)$.
\end{theorem}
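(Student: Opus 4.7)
The plan is to handle the two parts of the theorem — the per-edge approximation guarantee and the total running time — separately. For the guarantee, I would show that the numerator and denominator of $\hat{\mathcal{R}}_Q^\Delta(e)$ are each an $\eps$-approximation of the corresponding factor in the exact expression~\eqref{eq:SM}, then invoke the ratio-preservation property (item~(3) of the proposition on $\eps$-approximation in Section~3.1) to lift this to an approximation of the quotient.

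For the numerator $w(e)\norm{\LL(S)_Q^{-1}\ee_u}^2$, I would write the chain
\begin{align*}
\norm{\LL(S)_Q^{-1}\ee_u}^2
\;\approx_{\eps}\;
\norm{\PP\,\LL(S)_Q^{-1}\ee_u}^2
\;=\;\norm{\ZZ^{(1)}\ee_u}^2
\;\approx_{\eps}\;
\norm{\ZZtil^{(1)}\ee_u}^2,
\end{align*}
where the first $\approx_{\eps}$ is the JL guarantee (Lemma~\ref{lem:jl}) applied with $p=\lceil 24\log n/\eps^2\rceil$, and the second is Lemma~\ref{lem:z1-zhat}, whose hypothesis on $\delta_1$ is exactly~\eqref{DLT1}. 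After folding the two consecutive approximations into one (which is why the algorithm works with the rescaled $\eps/6$ constants hidden inside the proofs of Lemma~\ref{lem:z1-zhat} and Lemma~\ref{lem:z3-zhat}), multiplying by the positive scalar $w(e)$ preserves the approximation. For the denominator I apply the SDD decomposition~\eqref{eq:deno}, replace the two $\ell_2$ norms by their JL projections $\norm{\ZZ^{(2)}\ee_u}^2+\norm{\ZZ^{(3)}\ee_u}^2$, then pass to $\norm{\ZZtil^{(2)}\ee_u}^2+\norm{\ZZtil^{(3)}\ee_u}^2$ via Lemma~\ref{lem:z3-zhat} (with $\delta_2$ satisfying~\eqref{DLT2}); the additive $1$ is absorbed by item~(1) of the proposition. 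Combining the resulting numerator and denominator approximations via item~(3) of the proposition yields $\mathcal{R}_Q^\Delta(e)\approx_{3\eps}\hat{\mathcal{R}}_Q^\Delta(e)$, and a union bound over the at most $q(n-q)$ edges of $E_Q$ (and over the three JL matrices) keeps the failure probability inverse-polynomial in $n$.

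The running time is a straightforward tally. With $p=\Theta(\eps^{-2}\log n)$, the JL matrices $\PP,\QQ,\RR$, the sparse factors $\WW'^{1/2},\BB',\XX^{1/2}$, and the products $\boldsymbol{\mathit{Y}}^{(1)},\boldsymbol{\mathit{Y}}^{(2)}$ are all formed in $O(pm)$ time. The dominant cost is the $3p$ calls to $\SDDMSolver$ on $\LL(S)_Q$, each taking $O(m\log^{0.5}n\,\log\delta^{-1}\operatorname{polyloglog}(n))$ time by Lemma~\ref{lem:solve}. Since the bounds~\eqref{DLT1} and~\eqref{DLT2} give $\delta_1,\delta_2=\mathrm{poly}(n,w_{\min},w_{\max},\eps)^{-1}$, we have $\log\delta^{-1}=O(\log n+\log\eps^{-1})$, so the total solver cost is $O(m\eps^{-2}\log^{2.5}n\,\log\eps^{-1}\operatorname{polyloglog}(n))$. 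The final per-edge post-processing evaluates three $p$-dimensional squared norms, an $O(p)=O(\eps^{-2}\log n)$ operation, giving $O(q(n-q)\eps^{-2}\log n)$ over $|E_Q|\le q(n-q)$ edges and matching the claimed bound.

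The main obstacle I expect is the bookkeeping of the accumulated approximation error. One has to check that the composition of the JL perturbation with the solver-induced perturbation, whose magnitude is dictated by the specific choice of $\delta_1,\delta_2$ in~\eqref{DLT1}–\eqref{DLT2}, still collapses into a single $\eps$-approximation for each of the numerator and denominator (which is why those $\delta$'s are polynomially small in $n$, $w_{\min}^{-1}$, $w_{\max}$, and $\eps$); only once this is cleanly in place does the ratio rule give the advertised $3\eps$ factor in~\eqref{eq:del} for free.
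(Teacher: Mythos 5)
Your proposal matches the paper's own proof: the approximation guarantee is obtained exactly as you describe, by combining Lemma~\ref{lem:z1-zhat} (for the numerator) with Lemma~\ref{lem:z3-zhat} and the decomposition~\eqref{eq:deno} (for the denominator) and then applying the ratio rule from the $\eps$-approximation proposition, and the running-time tally (forming $\boldsymbol{\mathit{Y}}^{(1)},\boldsymbol{\mathit{Y}}^{(2)}$ in $O(m\eps^{-2}\log n)$, the $3p$ solver calls, and the $O(p)$ per-edge post-processing) is identical. Your additional remarks on folding the JL and solver errors and on the union bound are details the paper leaves implicit, not a different route.
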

\iffalse
\begin{proof}
	Combining Lemma~\ref{lem:z1-zhat} with Lemma~\ref{lem:z3-zhat} proves~\eqref{eq:del}.
	With respect to the time complexity, Algorithm~\ref{alg:MVC2} first computes $\boldsymbol{\mathit{Y}}^{(1)}$ and $\boldsymbol{\mathit{Y}}^{(2)}$, which runs in $O(m\eps^{-2}\log n)$. The algorithm then performs $3p$ calls of $\SDDMSolver$  resulting in complexity $O(m \eps^{-2} \log^{2.5} n \log \eps^{-1}\\ \operatorname{polyloglog}(n))$. Finally, the algorithm computes $\hat{\mathcal{R}}_Q^\Delta(e)$ for each $e \in E_Q$, taking time $O(q(n-q)\eps^{-2} \log n)$. Thus, the time complexity of algorithm $\GainsEst$ is $O(m \eps^{-2} \log^{2.5} n \log \eps^{-1} \operatorname{polyloglog}(n)
	+\\q(n-q)\eps^{-2} \log n)$.
	This completes the proof.
\end{proof}
\fi

\subsection{Accelerated  Algorithm for Approximating $\mathcal{R}_Q^\Delta(e)$}

Although Algorithm~\ref{alg:MVC2} is fast, it can still be improved both in the space and runtime requirements. In Algorithm~\ref{alg:MVC2},  intermediate variables $\PP$, $\RR$ and $\QQ$   are stored in three matrices: two $p \times n$ matrices and one $p \times m$ matrix. In fact, these intermediate variables can be replaced by three vectors: two $1 \times n$ vectors $\pp^\top$ and $\rr^\top$, and one $1 \times m$ verctor $\qq^\top$. Analogously, intermediate variables $\ZZtil^{(1)}$, $\ZZtil^{(2)}$ and $\ZZtil^{(3)}$ in Algorithm~\ref{alg:MVC2} need not be stored in three $p \times n$ matrices but, instead, three vectors $\zztil_1^\top$, $\zztil_2^\top$ and $\zztil_3^\top$ of size $1 \times n$ are sufficient. These observations result in significant improvement in the space requirement of Algorithm~\ref{alg:MVC2}, based on which we propose an accelerated algoirhtm $\FGainsEst$ shown in Algorithm~\ref{alg:FMVC2}.

In addition to the space-efficient implementation, Algorithm~\ref{alg:FMVC2} also reduces the computation cost, in contrast with Algorithm~\ref{alg:MVC2}. Note that in Algorithm~\ref{alg:MVC2}, the execution of Line 12 is decomposed into two parts: one is evaluating numerator, the other is computing denominator. In order to obtain $\hat{\mathcal{R}}_Q^\Delta(e)$, each part needs to compute the $\ell_2$ norm in $p = O(\eps^{-2} \log{n})$ time, leading to the $q(n-q)\eps^{-2} \log n$ complexity for the second loop. In contrast, in Algorithm~\ref{alg:FMVC2}, three random vectors $\pp^\top$, $\rr^\top$, and   $\qq^\top$ are created and exploited for projecting the nodes. In this case, the two parts of $\hat{\mathcal{R}}_Q^\Delta(e)$ are computed additively. Specifically, in each iteration of the first  loop, the contributions $\hat{t}(u)$ and $\hat{r}(u)$ to $\hat{\mathcal{R}}_Q^\Delta(e)$ are updated by adding related quantities (Line 10 and Line 11 of Algorithm~\ref{alg:FMVC2}). Since $\hat{t}(u)$ and $\hat{r}(u)$ are computed in the first loop, the cost of the second loop of Algorithm~\ref{alg:MVC2} is reduced to $O(q(n-q))$. Thus, Algorithm~\ref{alg:FMVC2} runs in $O(m \eps^{-2} \log^{2.5} n \log \eps^{-1} \operatorname{polyloglog}(n)+q(n-q))$.

On the other hand,  the computation time of Algorithm~\ref{alg:FMVC2} can be further reduced since it is amenable to parallel implementation. Specifically, for the first \textbf{for} loop,  each iteration  can be executed  independently and in parallel, in different cores. The result of parallel treatment does not affect the solution returned by the algorithm,  but  leads to significant improvement in running time: in a parallel system with $O(\log{n})$ cores, the running time of the parallel version of Algorithm~\ref{alg:FMVC2} is $O(m \eps^{-2} \log^{1.5} n \log \eps^{-1} \operatorname{polyloglog}(n)+q(n-q))$. In all our experiments,  this parallelization is  used to reduce the running time.

%%%%%%%%%%%%%%%%%%%%%%%%%%%%%%%%%%%%%%%%
\begin{algorithm}[htbp]
	\caption{$\FGainsEst\kh{G,\LL_Q, Q, E_Q, \eps}$}
	\label{alg:FMVC2}
	\Input{
		A graph $G$; a sparse matrix $\LL_Q$; a node set $Q \subset V$; a candidate edge set $E_Q$; a real number $0 \leq \epsilon \leq 1/4$
	}
	\Output{
		$\{ (e,\hat{\mathcal{R}}_Q^\Delta(e)) | e \in E_Q \}$
	}
	
	set $\delta_1$ and $\delta_2$ according to (\ref{DLT1}) and (\ref{DLT2}) \;
	$p =\ceil{24\log n/ \eps^2}$ \;
	Compute sparse matrices $\WW'^{1/2}$, $\BB'$ and $\XX^{1/2}$\;
	$\hat{t}(u) = \hat{r}(u) = 0$ for all $u \in V \setminus Q$ \;
	\For{$i = 1$ to $p$}{
		Construct three $\pm 1 / \sqrt{p}$ random vectors $\pp^\top$, $\qq^\top$ and $\rr^\top$ \;
		$\zztil_1^\top
		= \SDDMSolver(\LL(S)_Q,
		\pp^\top, \delta_1)$ \;
		$\zztil_2^\top
		=\SDDMSolver(\LL(S)_Q,
		\qq^\top \WW'^{1/2} \BB', \delta_2)$\;
		$\zztil_3^\top
		= \SDDMSolver(\LL(S)_Q,
		\rr^\top \XX^{1/2}, \delta_2)$\;
		$\hat{t}(u) = \hat{t}(u)+\kh{\zztil_1^\top \ee_u}^2$ for all $u \in V \setminus Q$ \;
		$\hat{r}(u) = \hat{r}(u)+\kh{\zztil_2^\top \ee_u}^2+\kh{\zztil_3^\top \ee_u}^2$ for all $u \in V \setminus Q$ \;
	}
	\For{each $e\in E_Q$}{
		$u=$ the vertex that $e$ connects in set $V \setminus Q $\;
		$\hat{\mathcal{R}}_Q^\Delta(e) =
		\frac{w(e)\hat{t}(u)}
		{1+w(e)\hat{r}(u)}$
	}
	\Return $\{ (e,\hat{\mathcal{R}}_Q^\Delta(e)) | e \in E_Q \}$.
\end{algorithm}
%%%%%%%%%%%%%%%%%%%%%%%%%%%%%%%%%%%%%%%%%

\subsection{Fast Algorithm for Objective Function}

Exploiting Algorithm~\ref{alg:FMVC2} to approximate $\mathcal{R}_Q^\Delta(e)$, we propose a fast greedy algorithm $\ApproxiSM$ for solving problem~(\ref{pro2}), as reported in Algorithm~\ref{alg:MVC3}. The computational complexity of  algorithm $\ApproxiSM$ is easy to compute in the  following way. Note that Algorithm~\ref{alg:MVC3} iterates  $i$ times.
At each iteration $i$, it executes the call of $\FGainsEst$ in time $O(m \eps^{-2} \log^{2.5} n \log \eps^{-1} \operatorname{polyloglog}(n)+qn)$, finds edge $e_i$ in time $O(qn)$,  and performs  other operations in time $O(1)$. Thus, its total running time is $O(km \eps^{-2} \log^{2.5} n \log \eps^{-1} \operatorname{polyloglog}(n)+kq(n-q))$.

\begin{algorithm}[htbp]
	\caption{$\ApproxiSM(G, Q, E_Q, k, \epsilon)$}
	\label{alg:MVC3}
	\Input{
		A connected graph $G$; a node set $Q \subset V$; a candidate edge set $E_Q$;
		an integer $k \leq |E_Q|$; a real number $0 \leq \eps \leq 1/4$
	}
	\Output{
		$S$: a subset of $E_Q$ and $|S| = k$
	}
	$S = \emptyset$ \;
	Compute sparse matrix $\LL_Q$\;
	\For{$i = 1$ to $k$}{
		$\{(e,\hat{\mathcal{R}}_Q^\Delta(e)) \}=\FGainsEst(G,\LL_Q, Q, E_Q, \eps)$. \;
		$e_i = \mathrm{arg\,max}_{e \in E_Q\setminus S} \hat{\mathcal{R}}_Q^\Delta(e)$ \;
		$S = S \cup \{ e_i \}$ \;
		$G = G(V,E \cup \{ e_i \})$\;
		$u=$ the vertex that $e_i$ connects in set $V \setminus Q $\;
		$\LL_Q = \LL_Q + w(e_i)\EE_{uu}$
	}
	\Return $S$
	
\end{algorithm}

The output $\hat{S}$ of  Algorithm~\ref{alg:MVC3} gives a $\kh{1 - {1}/{e} - \eps}$ approximate solution to problem~(\ref{pro2}) as provided by the following theorem.
\begin{theorem}
	Given a connected undirected graph $G = (V,E,w)$ with $n$ vertices, $m$ edges, positive edge weights
	$w : E \to \mathbb{R}_{+}$, a set $Q\subseteq V$ of $q$ target vertices, a set $E_Q$ of edges, each connecting one vertex in $Q\subseteq V$ and one vertex in $V  \setminus Q$,
	and scalars $0<\epsilon \leq1/4$, 	the algorithm $\ApproxiSM(G, Q, E_Q, k, \epsilon)$ returns a set  $\hat{S}$ of $k$ edges in  $E_Q$,   satisfying
	\begin{align}\label{eq:bound1}
		\mathcal{R_Q}(\emptyset) - \mathcal{R_Q}(\hat{S})  \leq (1 - 1/e - \eps) (\mathcal{R_Q}(\emptyset) - \mathcal{R_Q}(S^*)),
	\end{align}
	where $S^*$ is the optimal solution to~(\ref{pro2}).
\end{theorem}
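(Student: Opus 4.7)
The plan is to combine the standard Nemhauser–Wolsey–Fisher guarantee for greedy maximization of a monotone submodular set function with the per-step approximation guarantee already established for $\FGainsEst$ in Theorem~\ref{lem:edgebyjl}. First, I would recast the problem in the submodular maximization framework: define the gain function $f(S) \defeq \mathcal{R}_Q(\emptyset) - \mathcal{R}_Q(S)$ on $2^{E_Q}$. By Theorem~\ref{thm:MI} it is monotone increasing, and by Theorem~\ref{thm:SM} it is submodular, with $f(\emptyset) = 0$. The quantity $\mathcal{R}_Q^\Delta(e)$ used in the greedy rule is precisely the marginal gain $f(S \cup \{e\}) - f(S)$, so Algorithm~\ref{alg:MVC1} is an exact greedy on $f$ and its guarantee is the classical $(1 - 1/e)$ bound.

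Next I would quantify how much is lost by replacing $\mathcal{R}_Q^\Delta(e)$ with $\hat{\mathcal{R}}_Q^\Delta(e)$ at each step of $\ApproxiSM$. At iteration $i$, let $e_i^\star$ be the true greedy choice and $\hat{e}_i$ the edge actually selected. Theorem~\ref{lem:edgebyjl} yields $\mathcal{R}_Q^\Delta(e) \approx_{3\eps} \hat{\mathcal{R}}_Q^\Delta(e)$ uniformly over $e \in E_Q$ with high probability, so
\[
\mathcal{R}_Q^\Delta(\hat{e}_i) \ \geq\ \frac{1}{1+3\eps}\,\hat{\mathcal{R}}_Q^\Delta(\hat{e}_i)\ \geq\ \frac{1}{1+3\eps}\,\hat{\mathcal{R}}_Q^\Delta(e_i^\star)\ \geq\ \frac{1-3\eps}{1+3\eps}\,\mathcal{R}_Q^\Delta(e_i^\star)\ \geq\ (1-\alpha)\,\mathcal{R}_Q^\Delta(e_i^\star),
\]
where $\alpha = O(\eps)$. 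Thus at every step the algorithm obtains at least a $(1-\alpha)$ fraction of the best available marginal gain of $f$.

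Then I would invoke the standard approximate-greedy analysis for monotone submodular maximization under a cardinality constraint: if each greedy step achieves at least a $(1-\alpha)$ fraction of the best marginal improvement, then the final set $\hat{S}$ satisfies $f(\hat{S}) \geq (1 - e^{-(1-\alpha)})\,f(S^\star)$. Expanding $e^{-(1-\alpha)} = e^{-1}e^{\alpha} \leq e^{-1}(1+2\alpha)$ for small $\alpha$ gives $f(\hat{S}) \geq (1 - 1/e - O(\eps))\,f(S^\star)$. Rescaling the internal precision parameter (running $\FGainsEst$ with $\eps' = c\eps$ for a suitable absolute constant $c$) absorbs the hidden constant and delivers exactly the bound $(1 - 1/e - \eps)$ claimed in the theorem. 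Substituting $f$ back yields the desired inequality on $\mathcal{R}_Q(\emptyset) - \mathcal{R}_Q(\hat{S})$.

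The routine calculations are all one-line estimates; the genuine work is the approximate-greedy lemma for submodular maximization, which I would either cite from the literature (Goundan–Schulz style, or the original Nemhauser et al. argument adapted to multiplicative per-step error) or re-derive by the telescoping argument $f(S^\star) - f(S_i) \leq k \cdot \max_{e}[f(S_i \cup \{e\}) - f(S_i)]$ combined with the $(1-\alpha)$-approximate pick, and then solve the resulting recurrence $f(S^\star) - f(S_{i+1}) \leq (1 - (1-\alpha)/k)(f(S^\star) - f(S_i))$. The main obstacle, and the only place any care is needed, is bookkeeping the $\eps$-rescaling so that $3\eps$ per-edge approximation error converts cleanly into a global slack of exactly $\eps$ in the final approximation factor; all the other steps follow directly from results already proved in the paper.
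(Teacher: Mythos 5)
Your proposal matches the paper's own argument: both use the per-step $3\eps$-approximation of Theorem~\ref{lem:edgebyjl} to show that each greedy pick captures a $(1-O(\eps))$ fraction of the best available marginal gain, and then telescope the standard supermodularity recurrence $\mathcal{R}_Q(\hat{S}_{i+1}) - \mathcal{R}_Q(S^*) \leq \bigl(1-(1-O(\eps))/k\bigr)\bigl(\mathcal{R}_Q(\hat{S}_i) - \mathcal{R}_Q(S^*)\bigr)$ to obtain the $1-1/e-\eps$ factor. If anything, your explicit $\frac{1-3\eps}{1+3\eps}$ accounting for the argmax over estimated gains and your rescaling of the internal precision parameter are more careful than the paper's version, which passes directly from the $3\eps$-approximation to a $(1-3\eps)/k$ recurrence and absorbs $e^{3\eps}/e$ into $1/e+\eps$ without the rescaling that actually justifies it.
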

\begin{table}[tb]
	\centering
	\caption{Statistics of datasets for real-world networks  and the average running times  (seconds, $s$) of $\ExactSM$ and $\ApproxiSM$ algorithms on these networks. For any network,   $n$ and $m$  denote, respectively, the number of nodes and edges in its largest connected component. }\label{SetNo}
	\resizebox{1\columnwidth}{!}{
		\begin{tabular}{ccccc}
			\toprule
			\multirow{2}{*}{\centering Network} & \multirow{2}{*}{\centering $n$} & \multirow{2}{*}{\centering $m$} & \multicolumn{2}{c}{Running Time ($s$)} \\
			\cmidrule{4-5}
			\fontsize{6.5}{8}\selectfont
			& & & $\ExactSM$ &$\ApproxiSM$  \\
			\midrule
			Karate &  34 &  78 &  0.08 &  1.59 \\
			Windsurfers &  43 &  336 &  0.05 &  1.60 \\
			Dolphins &  62 &  159 &  0.09 &  1.68 \\
			Lesmis &  77 &  254 &  0.06 &  1.62 \\
			Adjnoun &  112 &  425 &  0.07 &  1.64 \\
			Celegansneural &  297 &  2148 &  0.13 &  1.78 \\
			Chicago &  823 &  822 &  0.47 &  1.84 \\
			Hamster Full &  2000 &  16098 &  1.19 &  4.00 \\
			Facebook &  4039 &  88234 &  4.17 &  36.15 \\
			GrQc &  4158 &  13422 &  4.14 &  4.95 \\
			Power Grid &  4941 &  6594 &  6.29 &  4.45 \\
			High Energy &  5835 &  13815 &  9.18 &  5.91 \\
			Reactome &  5973 &  145778 &  9.67 &  93.78 \\
			Route Views &  6474 &  12572 &  10.28 &  5.55 \\
			HepTh &  8638 &  24806 &  21.74 &  9.95 \\
			Pretty Good Privacy &  10680 &  24316 &  38.35 &  14.14 \\
			HepPh &  11204 &  117619 &  43.58 &  75.86 \\
			AstroPh &  17903 &  196972 &  161.98 &  201.64 \\
			Internet &  22963 &  48436 &  308.99 &  33.11 \\
			CAIDA &  26475 &  53381 &  447.77 &  39.31 \\
			Enron Email &  33696 &  180811 &  854.77 &  203.33 \\
			Condensed Matter &  36458 &  171735 &  1134.55 &  184.41 \\
			Brightkite &  56739 &  212945 &  3454.63 &  300.44 \\
			Word Net &  145145 &  656230 &  --- &  3883.64 \\
			Gowalla &  196591 &  950327 &  --- &  9947.73 \\
			DBLP &  317080 &  1049866 &  --- &  13476.64 \\
			Amazon &  334863 &  925872 &  --- &  10060.25 \\
			Pennsylvania &  1087562 &  1541514 &  --- &  42816.90 \\
			Texas &  1351137 &  1879201 &  --- &  65877.31 \\
			\bottomrule
		\end{tabular}
	}
\end{table}

\section{Experiments}

In this section, we experimentally evaluate the performance of our proposed greedy algorithms on some real-world  networks taken from KONECT~\cite{kunegis2013konect} and SNAP~\cite{LeSo16}. %\footnote{https://snap.stanford.edu}
 For each network, we implement our experiments  on its largest connected components.   The information of the largest components for all networks is provided in Table~\ref{SetNo}, where  networks are listed in increasing size of the largest components.  The performance we evaluate  includes the quality of the solutions of both algorithms and their running time.

All algorithms in our experiments are executed in Julia. In our algorithms, we use the linear solver  $\SDDMSolver$~\cite{kyng2016approximate}. The source code of our algorithms is available at \url{https://github.com/vivian1tsui/optimize_polarization}. 
All experiments were conducted on a machine equipped with  32G RAM and 4.2 GHz Intel i7-7700 CPU. In our experiments,  the $q$  leader nodes in $Q$ are randomly selected from the set $V$ of all nodes. The candidate edge set $E_Q$ is composed of all nonexistent edges, each having  unit weight $w = 1$, with one end in $Q$ and the other end in $V\setminus Q$.  For the approximated algorithm $\ApproxiSM$,  we  set $\eps= 0.2$, since it is enough to achieve good performance.

\subsection{Effectiveness of Greedy Algorithms}

We first evaluate the effectiveness of our algorithms, by comparing them with both the optimum solutions and an alternative random scheme, by randomly selecting $k$ edges from $E_Q$.  For this purpose, we  execute   experiments on four small realistic networks: Karate network, Windsufers network, Dolphins network and Lesmis network. These networks are small, allowing us to compute the optimal set of edges. We consider two cases:  the cardinality  of $Q$ equals 3 or 5. For each case, we add  $k=1,2,\ldots,6$ edges, and the results reported are averages of 10 repetitions. Figures~\ref{smallQ3} and~\ref{smallQ5} report  the results for $|Q|=3$ and $|Q|=5$, respectively.   We observe that the  solutions returned by our two greedy algorithms
and the optimum solution are almost the same,  all of which are much better than those returned by the random scheme.

%%%%%%%%%%%%%%%%%%%%%%%%%%%%%%%%%%%%%%%%%%%%
\begin{figure}[tb]
	\centering
	\includegraphics[width=\linewidth]{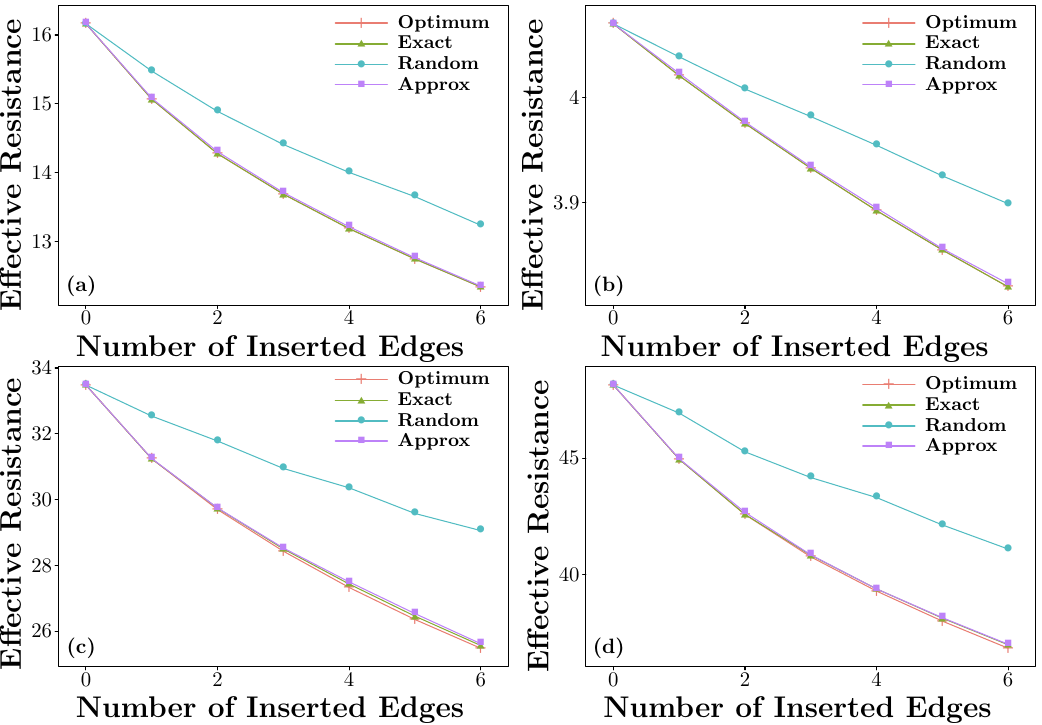}
	\caption{ Effective resistance for a set $Q$ of $q=3$ leader nodes   as a function of the number $k$ of inserted edges for $\ExactSM$, $\ApproxiSM$, random and the optimum solution on four networks: (a) Karate, (b) Windsufers, (c) Dolphins,  and (d) Lesmis.\label{smallQ3}}
\end{figure}
%%%%%%%%%%%%%%%%%%%%%%%%%%%%%%%%%%%%%%

%%%%%%%%%%%%%%%%%%%%%%%%%%%%%%%%%%%%%%%%%%%%
\begin{figure}[tb]
	\centering
	\includegraphics[width=\linewidth]{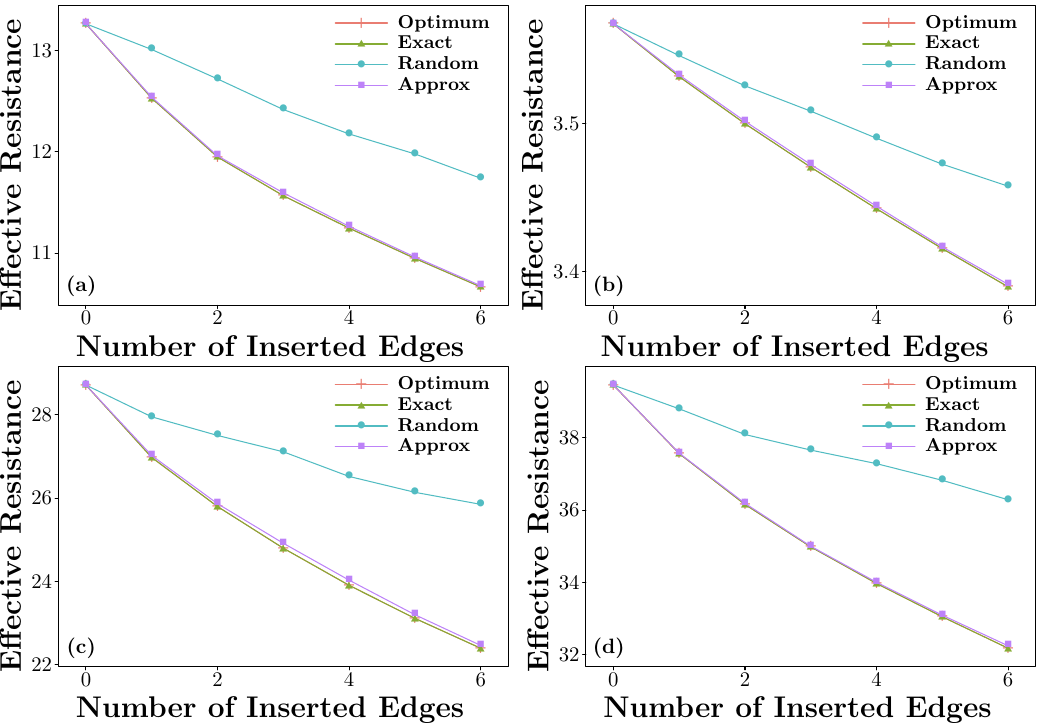}
	\caption{Effective resistance for a set $Q$ of $q=5$ leader nodes   as a function of the number $k$ of inserted edges for $\ExactSM$, $\ApproxiSM$, random and the optimum solution on four networks: (a) Karate, (b) Windsufers, (c) Dolphins,  and (d) Lesmis.\label{smallQ5}}
\end{figure}
%%%%%%%%%%%%%%%%%%%%%%%%%%%%%%%%%%%%%%%%%%%%

To further show the accuracy of our algorithms,  we continue to compare our algorithms with some schemes  on four larger networks, including Chicago, Hamster Full, Facebook, and HepTh. Since these networks are large, we can hardly obtain the optimum solutions. We  consider the following three baselines,  random scheme,  TopDegree, and TopCent. In TopDegree (TopCent) scheme, we choose the node in $V\setminus Q$ with the highest degree (smallest effective resistance) and link it to random $k$ nodes in $Q$.  We also consider two cases:  $|Q|=5$ and $|Q|=10$.  In Figures~\ref{largeQ5} and~\ref{largeQ10}, we report the results for $k=1,2,\ldots,20$. Both figures show  that  there is little difference between the solutions of  our two greedy algorithms,  which are significantly  better than the solutions of the three baselines.

%%%%%%%%%%%%%%%%%%%%%%%%%%%%%%%%%%%%%%%%%%%%
\begin{figure}[tb]
	\centering
	\includegraphics[width=\linewidth]{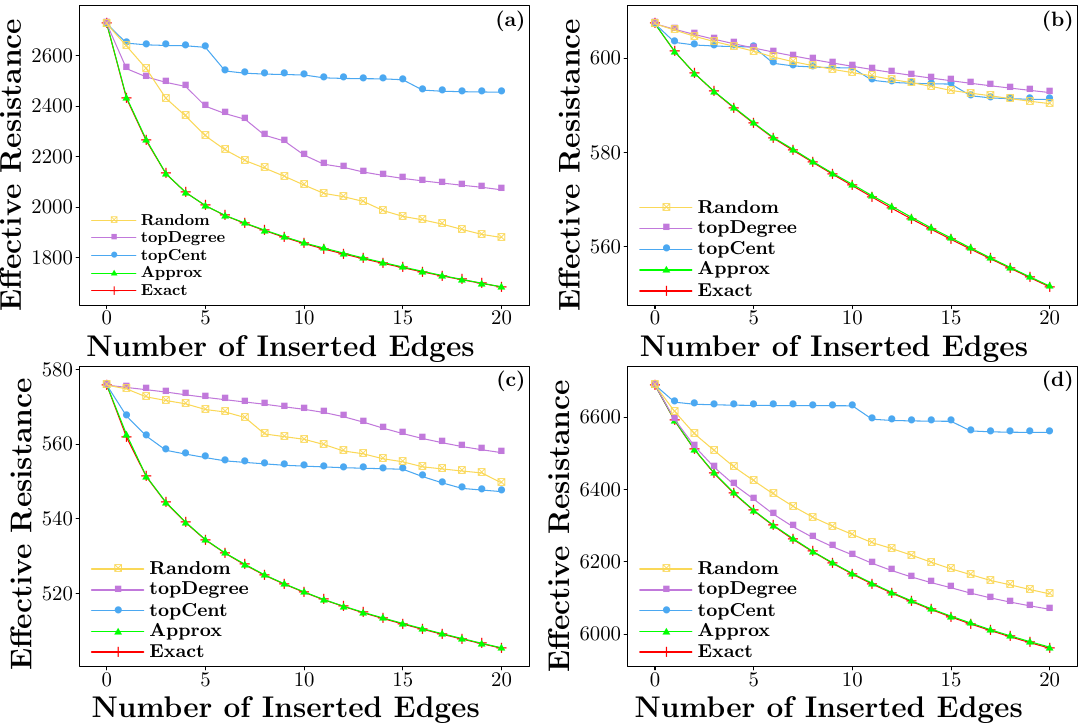}
	\caption{Effective resistance for a set $Q$ of $q=5$ leader nodes   as a function of the number $k$ of inserted edges for five heuristics,  $\ExactSM$,  $\ApproxiSM$, random, TopDegree, and TopCent on four networks: (a) Chicago,  (b) Hamster Full, (c) Facebook,  and (d) HepTh.\label{largeQ5}}
\end{figure}
%%%%%%%%%%%%%%%%%%%%%%%%%%%%%%%%%%%%%%%%%%%%

%%%%%%%%%%%%%%%%%%%%%%%%%%%%%%%%%%%%%%%%%%%%
\begin{figure}[tb]
	\centering
	\includegraphics[width=\linewidth]{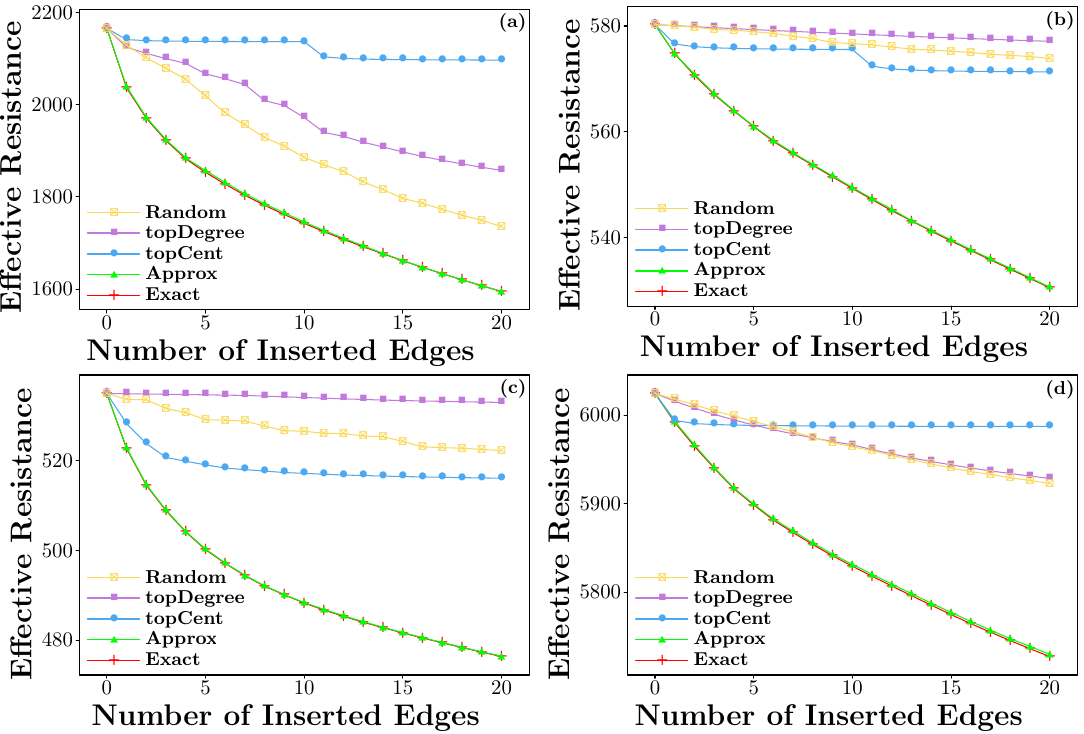}
	\caption{Effective resistance for a set $Q$ of $q=10$ leader nodes as a function of the number $k$ of inserted edges for five heuristics,  $\ExactSM$,  $\ApproxiSM$, random, TopDegree, and TopCent on four networks: (a) Chicago,  (b) Hamster Full, (c) Facebook,  and (d) HepTh.\label{largeQ10}}
\end{figure}
%%%%%%%%%%%%%%%%%%%%%%%%%%%%%%%%%%%%%%%%%%%%

\subsection{Efficiency Comparison of Greedy Algorithms}

Although both  greedy algorithms  $\ApproxiSM$  and $\ExactSM$ produce good solutions, we will show  that they differ  greatly in the efficiency. For this purpose, we compare the running time of   $\ApproxiSM$  and $\ExactSM$ on some  realistic networks.  For every network, we randomly select a candidate set $Q$ of 10 target vertices, and calculate the effective resistance of $Q$ after adding $k=20$ new edges incident to vertices in $Q$ and $V \setminus Q$, and record the  running time. In Table~\ref{SetNo}, We list the running time of the two  greedy algorithms. It can be observed that for small networks with less than 18,000 nodes, $\ApproxiSM$ performs  a little slowly for most cases. However, for those networks with more than  22,000 nodes,  $\ApproxiSM$ is always much faster than $\ExactSM$, and gradually becomes faster as the node number increases.  Moreover, for those networks with more than 100,000 nodes, $\ExactSM$ fails due to the high time and memory cost,  $\ApproxiSM$ can still solve the  effective resistance. Finally,  it should be stressed that  although $\ApproxiSM$ is more efficient than $\ExactSM$, the solutions returned by both algorithms are very close to each other, as shown in Table~\ref{time}.

%%%%%%%%%%%%%%%%%%%%%%%%%%%%
\begin{table}
	\centering
	\caption{Effective resistance of a group $Q$ of 10 vertices returned by algorithms  $\ApproxiSM$  and $\ExactSM$   for some real-world networks, as well as  the ratios of results of $\ApproxiSM$ to those  of $\ExactSM$.\label{time}}
	\resizebox{0.8\columnwidth}{!}{
		\begin{tabular}{ccccc}
			\toprule
			\fontsize{6.5}{8}\selectfont
			\multirow{2}*{ Network} & \multicolumn{3}{c}{Effective Resistance} \\
			\cmidrule{2-4}
			&$\ExactSM$& $\ApproxiSM$ & Ratio \\
			\midrule
			Karate &  6.2001 &  6.2909 &  1.0147 \\
			Windsurfers &  2.5033 &  2.5130 &  1.0039 \\
			Dolphins &  14.4097 &  14.5323 &  1.0085 \\
			Lesmis &  20.0719 &  20.1304 &  1.0029 \\
			Adjnoun &  23.0843 &  23.3235 &  1.0104 \\
			Celegansneural &  38.0503 &  38.1220 &  1.0019 \\
			Chicago &  1617.9612 &  1674.8639 &  1.0352 \\
			Hamster Full &  529.1304 &  529.6522 &  1.0010 \\
			Facebook &  487.8539 &  489.6902 &  1.0038 \\
			GrQc &  2992.4740 &  3034.0660 &  1.0139 \\
			Power Grid &  9518.0740 &  9836.8890 &  1.0335 \\
			High Energy &  4883.3000 &  4931.2573 &  1.0098 \\
			Reactome &  1433.9990 &  1442.4733 &  1.0059 \\
			Route Views &  4879.2456 &  4925.0080 &  1.0094 \\
			HepTh &  5660.1680 &  5694.3535 &  1.0060 \\
			Pretty Good Privacy &  14984.6100 &  15156.1290 &  1.0114 \\
			HepPh &  4452.0347 &  4460.0850 &  1.0018 \\
			AstroPh &  4515.5264 &  4523.1973 &  1.0017 \\
			Internet &  16376.9200 &  16472.5400 &  1.0058 \\
			CAIDA &  19731.6020 &  19940.6800 &  1.0106 \\
			Enron Email &  18210.8300 &  18242.4260 &  1.0017 \\
			Condensed Matter &  15298.2550 &  15329.9380 &  1.0021 \\
			Brightkite &  41192.2230 &  41240.6560 &  1.0012 \\
			\bottomrule
		\end{tabular}
	}
\end{table}
%%%%%%%%%%%%%%%%%%%%%%%%%%%%%%%%%%%

\section{Conclusions}

We examined the problem of minimizing the polarization of the leader-follower opinion dynamics in a noisy social network $G=(V,E)$ with $n$ nodes and $m$ edges, where a group $Q \subset V$ of $q$ nodes are leaders, by adding $k$ new edges incident to the nodes in $Q$. It is  a combinatorial optimization problem with an exponential computational complexity, and is equivalent to minimizing the sum of resistance distance $\mathcal{R}_Q$  between the  node group $Q$ and all other nodes.  We proved that the object function is  monotone and supermodular. We then presented two approximation algorithms for computing $\mathcal{R}_Q$: the former returns a $(1-{1}/{e})$ approximation of the optimum in time $O((n-q)^3)$, while the latter provides a $\left(1-{1}/{e}-\eps\right)$ approximation in time $\Otil (mk\eps^{-2})$. We also compared our algorithms with several potential alternative algorithms. Finally, we performed extensive experiments on real-life networks, which demonstrate our algorithms outperform the baseline methods and can often compute an approximate optimal solution. In particular, our second algorithm can yield a good approximate solution very fast, making it scalable to large-scale networks with more than one million nodes.

%We examined the problem of maximizing the current flow closeness centrality $\mathcal{C}_Q$ of a given group $Q$ of  $q$  nodes by adding $k$ new edges incident to the nodes in $Q$. This problem belongs to the class of group centrality optimization that has found vast applications in various domains.  It   is equivalent to minimizing the sum of resistance distance $\mathcal{R}_Q$  between the  node group $Q$ and all other nodes.  We proved that the problem is NP-hard, but the object function is  monotone and supermodular. We then presented two approximation algorithms for computing $\mathcal{R}_Q$: the former returns a $(1-{1}/{e})$ approximation of the optimum in time $O(n^3)$, while the latter provides a $\left(1-{1}/{e}-\eps\right)$ approximation in time $\Otil (mk\eps^{-2})$. We also compared our algorithms with several potential alternative algorithms. Finally, we performed extensive experiments  on real-life networks, which demonstrate  our algorithms outperform the baseline methods and can often compute an approximate optimal solution. In particular, our second algorithm can yield a good approximate solution very fast, making it scalable to large-scale networks.
\section*{Acknowledgment}
Both authors are with the Shanghai Key Laboratory of Intelligent Information Processing, School of Computer Science, Fudan University, Shanghai 200433, China. This work was supported by the National Natural Science Foundation of China (Nos. U20B2051 and 61872093).
%% The file named.bst is a bibliography style file for BibTeX 0.99c
\bibliographystyle{ACM-Reference-Format}
\balance
%%% -*-BibTeX-*-
%%% Do NOT edit. File created by BibTeX with style
%%% ACM-Reference-Format-Journals [18-Jan-2012].

\end{document}